\definecolor{darkgreen}{rgb}{0.0, 0.6, 0.0}
\definecolor{darkred}{rgb}{0.9, 0.0, 0.0}
\definecolor{darkblue}{rgb}{0.0, 0.0, 0.9}
\newcommand{\faultClass}[1][]{{\color{red!50!black}{\texttt{f}_{#1}}}}
\newcommand{\LTLThree}{\ensuremath{\text{LTL}_3}\xspace}
\newcommand{\LTLFour}{RV-LTL\xspace}
\newcommand{\BoolThree}{\mathbb{B}_3}
\newcommand{\currentlyFalse}{\ensuremath{\texttt{f}^\mathsf{p}}}
\newcommand{\currentlyTrue}{\ensuremath{\texttt{t}^\mathsf{p}}}
\ProvideDocumentCommand{\set}{m}{\left\{\, #1 \,\right\}}
\ProvideDocumentCommand{\where}{}{\mskip6mu\middle|\mskip6mu}
\ProvideDocumentCommand{\filter}{mm}{\set{#1 \where #2}}
\ProvideDocumentCommand{\abs}{m}{{\left| #1 \right|}}
\ProvideDocumentCommand{\tuple}{m}{{\left\langle #1 \right\rangle}}
\DeclareDocumentCommand{\mxDeclareMathSymbol}{m m O{} O{}}{
  \DeclareDocumentCommand{#1}{t{'} D||{} O{} O{}}{
    \ensuremath{\smash{\mathord{#3 {#2}\IfBooleanT{##1}{'}##2^{##4}_{##3} #4}}}
  }
}
\DeclareDocumentCommand{\mxDeclareMathBin}{m m O{} O{}}{ %
  \DeclareDocumentCommand{#1}{t{'} D||{} O{} O{}}{ %
    \ensuremath{\mathbin{#3 {#2}_{##3}\IfBooleanT{##1}{'}##2^{##4} #4}} %
  } %
}
\DeclareDocumentCommand{\mxDeclareMathRel}{m m O{} O{}}{ %
  \DeclareDocumentCommand{#1}{t{'} D||{} O{} O{}}{ %
    \ensuremath{\mathrel{#3 {#2}_{##3}\IfBooleanT{##1}{'}##2^{##4} #4}} %
  } %
}
\newcommand{\mxPreFunSkipBack}{ %
  \mathchoice %
  {\mskip-4mu} %
  {\mskip-4mu} %
  {\mskip-2mu} %
  {\mskip-0mu} %
}
\DeclareDocumentCommand{\mxDeclareMathFun}{m m O{} O{} O{} O{} D||{(} D||{)}}{ %
  \DeclareDocumentCommand{#1}{t{'} D||{} O{} O{} d()}{ %
    \ensuremath{ %
      \mathord{#3 {#2}_{#5 ##3}\IfBooleanT{##1}{'}##2^{#6 ##4} #4} %
      \IfNoValueF{##5}{\mxPreFunSkipBack\left#7 ##5 \right#8} %
    } %
  } %
}
\newcommand{\mathsc}[1]{\text{\normalfont\fontfamily{lmr}\selectfont\textsc{#1}}}
\mxDeclareMathFun{\fTrace}{\sigma}
\mxDeclareMathFun{\fPowerSet}{\wp} 
\mxDeclareMathFun{\fDm}{\mathcal{V}_{\mathrm{det}}} 
\mxDeclareMathSymbol{\aWord}{\sigma}
\mxDeclareMathSymbol{\anInfSeq}{\sigma}
\mxDeclareMathFun{\bigO}{\mathcal{O}}
\mxDeclareMathFun{\fBigO}{\bigO}
\mxDeclareMathSymbol{\ltlAlways}{\square}
\mxDeclareMathSymbol{\ltlEventually}{\lozenge}
\mxDeclareMathSymbol{\ltlNext}{\bigcirc}
\mxDeclareMathSymbol{\sAtomicProps}{\mathrm{AP}}
\mxDeclareMathSymbol{\anAtomicProp}{a}
\mxDeclareMathSymbol{\anAPSet}{A}
\mxDeclareMathSymbol{\aBoolExpr}{\phi}
\mxDeclareMathSymbol{\boolExprTrue}{\texttt{true}}
\mxDeclareMathSymbol{\boolExprFalse}{\texttt{false}}
\mxDeclareMathRel{\boolAnd}{\land}
\mxDeclareMathRel{\boolOr}{\lor}
\mxDeclareMathRel{\boolImplies}{\rightarrow}
\mxDeclareMathSymbol{\boolNot}{\lnot}
\mxDeclareMathFun{\fBoolExprs}{\mathbb{B}}|[||]|
\mxDeclareMathFun{\fBoolSatis}{}|\llbracket||\rrbracket|
\mxDeclareMathFun{\fVerdict}{\mathsf{v}}
\mxDeclareMathSymbol{\aSet}{X}
\mxDeclareMathSymbol{\aSubset}{Y}
\mxDeclareMathSymbol{\anOtherSet}{\mathbf{Y}}
\mxDeclareMathSymbol{\anElement}{x}
\mxDeclareMathSymbol{\emptySeq}{\epsilon}
\mxDeclareMathRel{\join}{\sqcup}
\mxDeclareMathRel{\meet}{\sqcap}
\mxDeclareMathSymbol{\beliefDomain}{\mathfrak{B}}
\mxDeclareMathSymbol{\sVerdicts}{\mathsf{V}}
\mxDeclareMathSymbol{\aVerdict}{v}
\mxDeclareMathRel{\isMoreSpecific}{\sqsubseteq}
\mxDeclareMathRel{\beliefRefines}{\sqsubset}
\mxDeclareMathSymbol{\anLTS}{\mathfrak{T}}
\mxDeclareMathSymbol{\sLabels}{\mathbf{\Lambda}}
\mxDeclareMathFun{\fLabel}{\lambda}
\mxDeclareMathSymbol{\sAlphabet}{\mathbf{\Sigma}}
\mxDeclareMathSymbol{\aSymbol}{a}
\mxDeclareMathSymbol{\aSymbolSet}{\Sigma}
\mxDeclareMathSymbol{\sStates}{\mathbf{S}}
\mxDeclareMathSymbol{\aState}{s}
\mxDeclareMathSymbol{\aStateSet}{S}
\mxDeclareMathSymbol{\transitionRel}{T}
\mxDeclareMathSymbol{\initialState}{\bar{s}}
\mxDeclareMathFun{\fPost}{\mathsc{Post}}
\mxDeclareMathFun{\fReach}{\mathsc{Reach}}
\mxDeclareMathFun{\fEnabled}{\mathsc{Enabled}}
\mxDeclareMathFun{\fLang}{\mathcal{L}}
\mxDeclareMathRel{\ltlUntil}{\mathsf{U}}
\mxDeclareMathFun{\fExecs}{\mathsf{E}}
\mxDeclareMathSymbol{\aVTS}{\mathcal{V}}
\mxDeclareMathSymbol{\sControlStates}{\mathcal{Q}}
\mxDeclareMathSymbol{\aControlStateSet}{Q}
\mxDeclareMathSymbol{\aControlState}{q}
\mxDeclareMathFun{\fBeliefs}{\mathsc{Verdicts}}
\mxDeclareMathFun{\fLower}{\mathsc{Lower}}
\mxDeclareMathSymbol{\sFeatures}{\mathcal{F}}
\mxDeclareMathSymbol{\aFeatTS}{\anLTS_{\mathrm{feat}}}
\mxDeclareMathSymbol{\aWord}{\hat{\sigma}}
\mxDeclareMathSymbol{\sObservables}{\mathfrak{O}}
\mxDeclareMathSymbol{\anObservable}{a}
\providecommand{\mc}[1]{{\mathcal{#1}}}
\mxDeclareMathSymbol{\Real}{\mathbb{R}}
\mxDeclareMathSymbol{\Rational}{\mathbb{Q}}
\mxDeclareMathSymbol{\Nat}{\mathbb{N}}
\mxDeclareMathSymbol{\aConstant}{c}
\mxDeclareMathFun{\fDist}{\mathrm{Dist}}
\mxDeclareMathFun{\fPowerSet}{\mathcal{P}}
\mxDeclareMathFun{\fSupp}{\mathrm{Supp}}
\mxDeclareMathSymbol{\aTime}{t}
\mxDeclareMathSymbol{\arrivalTime}{t}[{^\downarrow}\mskip-1mu]
\mxDeclareMathSymbol{\occurrenceTime}{t}[{^\uparrow}\mskip-1mu]
\mxDeclareMathFun{\fActiveRuns}{\mathfrak{R}_{\mathbf{A}}}
\mxDeclareMathFun{\fActivePrefixRuns}{\mathfrak{R}^{\mathrm{pref}}_{\mathbf{A}}}
\mxDeclareMathSymbol{\anIndex}{i}
\mxDeclareMathSymbol{\anOtherIndex}{j}
\mxDeclareMathSymbol{\aHypothesis}{\mathcal{H}}
\mxDeclareMathSymbol{\historyBound}{B}
\mxDeclareMathFun{\fMinLatency}{\mathbf{l}_\mathrm{min}}
\mxDeclareMathFun{\fMaxLatency}{\mathbf{l}_\mathrm{max}}
\mxDeclareMathSymbol{\anAoA}{\theta}
\mxDeclareMathFun{\fRemapping}{r}
\newcommand{\failStuckAct}[1]{\texttt{fail\_stuck}}
\mxDeclareMathFun{\fArriveAt}{\Omega}
\mxDeclareMathFun{\fEventsArriveAt}{\mathbf{E}}
\mxDeclareMathSymbol{\arrivalTimes}{T_a}
\mxDeclareMathFun{\fIntSysObs}{I}
\mxDeclareMathFun{\fRemappingInterval}{\bar{r}}
\mxDeclareMathSymbol{\aFault}{\mathbf{f}\mskip-1mu}
\mxDeclareMathFun{\fDiag}{\mathcal{D}}
\mxDeclareMathFun{\fTProjections}{\mathbf{P}}
\mxDeclareMathFun{\fTFuture}{\mathbf{C}}
\mxDeclareMathSymbol{\obsHistory}{H}
\mxDeclareMathFun{\fActiveSubsets}{\mathbf{A}}
\mxDeclareMathSymbol{\diagDelay}{\Delta}
\mxDeclareMathSymbol{\aDriftParam}{\epsilon}
\mxDeclareMathFun{\fRemappedObsTime}{I}[][][][o]
\mxDeclareMathSymbol{\anObservation}{\omega}
\mxDeclareMathSymbol{\anObservationSet}{\Omega}
\mxDeclareMathBin{\until}{\mathbf{U}}
\mxDeclareMathBin{\parallel}{\mid\mid}
\mxDeclareMathBin{\ltlSatisfies}{\underset{\textsc{ltl}}{\vDash}}
\mxDeclareMathBin{\ctlSatisfies}{\underset{\textsc{ctl}}{\vDash}}
\mxDeclareMathBin{\happensBefore}{\prec}
\mxDeclareMathBin{\notHappensBefore}{\not\prec}
\mxDeclareMathSymbol{\sDiagnosisStates}{\mathbf{D}}
\mxDeclareMathSymbol{\fDiagnosisState}{\mathbf{d}}
\mxDeclareMathSymbol{\aTransition}{\mathcal{T}}
\mxDeclareMathSymbol{\aTPS}{\mathfrak{T}}
\mxDeclareMathSymbol{\aPTADist}{\mu}
\mxDeclareMathSymbol{\aTSDist}{\lambda}
\mxDeclareMathBin{\distCompose}{\otimes}
\mxDeclareMathSymbol{\anAbstractValuation}{\eta^{\mbox{\smaller[3]\#}}\mskip-2mu}
\mxDeclareMathSymbol{\initialZone}{\bar{\eta}^{\mbox{\smaller[3]\#}}\mskip-2mu}
\mxDeclareMathSymbol{\explainedObservations}{\anObservationSet}
\mxDeclareMathSymbol{\occurredFaults}{\mathbf{F}}
\newcommand{\hhDeclareMathSymbol}[2]{\mxDeclareMathSymbol{#1}{#2}}
\hhDeclareMathSymbol{\fault}{\aFault}
\mxDeclareMathFun{\fExt}{}|\llbracket||\rrbracket|
\mxDeclareMathFun{\fConsistentFragments}{\hat{\rho}}
\mxDeclareMathFun{\fFailures}{\mathit{\mathsf{failures}}}
\mxDeclareMathFun{\zoneReset}{\textsc{reset}}
\mxDeclareMathFun{\zoneFuture}{\textsc{future}}
\mxDeclareMathFun{\fConcretize}{\gamma}
\mxDeclareMathSymbol{\allObservations}{\mathfrak{O}}
\mxDeclareMathFun{\diagnosisAction}{\textsc{action}}
\mxDeclareMathFun{\diagnosisExplore}{\textsc{explore}}
\mxDeclareMathSymbol{\anAbstractState}{s^{\mbox{\smaller[3]\#}}\mskip-2mu}
\mxDeclareMathSymbol{\sAbstractStates}{S^{\#}}
\mxDeclareMathSymbol{\sStateLabels}{\mathrm{AP}}
\mxDeclareMathFun{\fBigO}{\mathcal{O}}
\mxDeclareMathFun{\fIh}{O}[][][][]
\mxDeclareMathFun{\fDiffB}{\mathfrak{B}}
\mxDeclareMathSymbol{\anEventSet}{\mathfrak{E}}
\mxDeclareMathSymbol{\fEventToObs}{\mathcal{R}}
\mxDeclareMathSymbol{\timeDiff}{\Delta t}
\mxDeclareMathFun{\fDur}{\mathrm{dur}}
\mxDeclareMathFun{\greenObs}{\mathsf{S}}
\mxDeclareMathFun{\redObs}{\mathsf{R}}
\mxDeclareMathFun{\yellowObs}{\mathsf{Y}}
\mxDeclareMathFun{\existsFinally}{\mathrm{EF}}
\mxDeclareMathFun{\forallGlobally}{\mathrm{AG}}
\mxDeclareMathSymbol{\sysModel}{\mathfrak{S}}
\mxDeclareMathFun{\diagAutomaton}{\mathfrak{D}\mskip-2mu}
\mxDeclareMathSymbol{\aMaybeFragment}{\rho}
\mxDeclareMathSymbol{\daExceeded}{\mathfrak{E}}
\mxDeclareMathSymbol{\locationTrap}{\bot}
\mxDeclareMathSymbol{\labelConsistent}{\textit{\textsf{consistent}}}
\mxDeclareMathSymbol{\labelInconsistent}{\textit{\textsf{inconsistent}}}
\mxDeclareMathSymbol{\labelTrap}{\textit{\textsf{trap}}\mskip2mu}
\mxDeclareMathSymbol{\sFaults}{\mathbf{F}}
\mxDeclareMathFun{\fProject}{\sigma}
\mxDeclareMathFun{\fProject}{\sigma}
\mxDeclareMathSymbol{\aTradObs}{\hat{\sigma}}
\mxDeclareMathSymbol{\remainingObs}{\Omega_R}
\mxDeclareMathSymbol{\seenFaults}{\mathbf{F}}
\mxDeclareMathSymbol{\labelFailure}{\textit{\textsf{fault}}}
\mxDeclareMathFun{\fFrontier}{\mbox{\smaller[1.5]$\mathfrak{F}$}\mskip-2mu}
\mxDeclareMathSymbol{\failureActions}{\Sigma\mskip-2mu_F}
\mxDeclareMathSymbol{\observableActions}{\sActions[O]}
\mxDeclareMathSymbol{\aRun}{\hat{\rho}}
\mxDeclareMathFun{\actAoA}{\textsc{AoA}}
\mxDeclareMathSymbol{\sClocks}{\mathbb{C}}
\mxDeclareMathSymbol{\sValuations}{\mathcal{V}}
\mxDeclareMathSymbol{\aClockSet}{\mathbf{C}}
\mxDeclareMathSymbol{\sRealTime}{\Real[0][+]}
\mxDeclareMathFun{\aValuation}{\eta}
\mxDeclareMathSymbol{\aTA}{\mathsc{ta}}
\mxDeclareMathSymbol{\aClock}{x}
\mxDeclareMathSymbol{\anOtherClock}{y}
\mxDeclareMathFun{\fLang}{\mathcal{L}}
\mxDeclareMathFun{\aWord}{\hat{\sigma}}
\mxDeclareMathSymbol{\theNullClock}{\mathbf{0}}
\mxDeclareMathBin{\aClockRel}{\sim}
\mxDeclareMathFun{\fClockConstraints}{\mathcal{Z}} 
\mxDeclareMathSymbol{\aGuard}{\mathfrak{a}}
\mxDeclareMathSymbol{\aConstraint}{g}
\mxDeclareMathSymbol{\aLocation}{\ell}
\mxDeclareMathSymbol{\anExecution}{\mathsf{e}}
\mxDeclareMathSymbol{\anExecutionFragment}{\hat{\rho}}
\mxDeclareMathSymbol{\sActions}{\mc{A}}
\mxDeclareMathSymbol{\anAction}{\alpha}
\mxDeclareMathSymbol{\sLocations}{L}
\mxDeclareMathSymbol{\theInitialLocation}{\bar{\ell}}
\mxDeclareMathFun{\theInvFun}{I}
\mxDeclareMathSymbol{\theEdgeRel}{\rightsquigarrow}
\mxDeclareMathSymbol{\anEvent}{e}
\mxDeclareMathSymbol{\aWordFragment}{\hat{\pi}}
\mxDeclareMathFun{\fEvents}{\mathrm{events}}
\mxDeclareMathFun{\fPref}{\mathrm{pref}}
\mxDeclareMathFun{\fLast}{\mathrm{last}}
\mxDeclareMathFun{\fDur}{\mathrm{dur}}
\mxDeclareMathFun{\fObsTime}{I}
\mxDeclareMathSymbol{\aScheduler}{\mathfrak{S}}
\mxDeclareMathFun{\fProb}{P}
\mxDeclareMathBin{\conditionalOn}{\,|\,}
\mxDeclareMathBin{\isConsistentWith}{\,\text{is consistent with}\,}
\mxDeclareMathBin{\isInconsistentWith}{\,\text{is inconsistent with}\,}
\mxDeclareMathFun{\fDirac}{\delta}
\mxDeclareMathFun{\fRuns}{\mathfrak{R}}
\mxDeclareMathFun{\fPlausibleRuns}{\widetilde{\mathfrak{P}}}
\mxDeclareMathFun{\fActive}{\mathrm{active}}
\mxDeclareMathFun{\fTPS}{}|\llbracket||\rrbracket|
\mxDeclareMathFun{\fTS}{}|\llbracket||\rrbracket|
\mxDeclareMathBin{\taReset}{R}
\mxDeclareMathSymbol{\aConfig}{\mathsf{c}}
\mxDeclareMathSymbol{\aConfigGuard}{\mathsf{C}}
\mxDeclareMathSymbol{\validConfigs}{\mathcal{C}}
\mxDeclareMathSymbol{\sGuards}{\mathbf{A}}
\mxDeclareMathFun{\fTrack}{\mathrm{Track}}
\mxDeclareMathFun{\fLabelProj}{\mathrm{Labels}}
\mxDeclareMathFun{\fGuards}{\mathrm{A}}
\mxDeclareMathSymbol{\aVATS}{\mathcal{A}}
\newcommand{\finiteSeqs}[1]{{#1}^{\star}}
\mxDeclareMathSymbol{\seqEmpty}{\epsilon}
\newcommand{\seqConcat}[2]{#1\diamond #2}
\mxDeclareMathSymbol{\aWord}{\sigma}
\mxDeclareMathSymbol{\anAlphabet}{\Sigma}
\mxDeclareMathSymbol{\aTS}{\mathfrak{S}}
\mxDeclareMathSymbol{\tsStates}{\mathcal{S}}
\mxDeclareMathSymbol{\tsActions}{\mathsf{Act}}
\mxDeclareMathSymbol{\tsInitialState}{\bar{s}}
\mxDeclareMathSymbol{\tsInitialStates}{S_I}
\mxDeclareMathSymbol{\tsRel}{T}
\mxDeclareMathSymbol{\tsStateSet}{S}
\mxDeclareMathSymbol{\tsActionSet}{A}
\mxDeclareMathFun{\tsPost}{\Delta}
\mxDeclareMathFun{\tsExec}{S}
\mxDeclareMathFun{\tsExecDet}{s^*\mskip-9mu}
\mxDeclareMathSymbol{\tsState}{s}
\mxDeclareMathSymbol{\tsAction}{\alpha}
\mxDeclareMathSymbol{\tsLabeling}{\mathfrak{L}}
\mxDeclareMathSymbol{\tsLabels}{L}
\mxDeclareMathFun{\tsLabel}{\ell}
\mxDeclareMathFun{\fExecVerdict}{\nu}
\mxDeclareMathSymbol{\tsObsAct}{\smash{\mathsf{OAct}}}
\mxDeclareMathSymbol{\tsUnobsAct}{\smash{\mathsf{UAct}}}
\mxDeclareMathSymbol{\tsFaultAct}{\smash{\mathsf{FAct}}}
\mxDeclareMathSymbol{\fakeVerdict}{\#}
\newcommand{\projection}[2]{\smash{#1 \mid_{#2}}}
\mxDeclareMathFun{\fDiagnosis}{\mathsf{d}}
\mxDeclareMathFun{\aDiagnosis}{\mathsf{D}}
\mxDeclareMathSymbol{\aFaultClassSet}{F}
\mxDeclareMathSymbol{\aLang}{\mathfrak{L}}
\mxDeclareMathSymbol{\aSubAlphabet}{A}
\mxDeclareMathSymbol{\sAtomicPropositions}{\mathsf{AP}}
\mxDeclareMathSymbol{\anAtomicProposition}{p}
\mxDeclareMathSymbol{\anLtlFormula}{\phi}
\mxDeclareMathSymbol{\monTrue}{\texttt{t}}
\mxDeclareMathSymbol{\monFalse}{\texttt{f}}
\mxDeclareMathSymbol{\monUnknown}{\texttt{?}}
\mxDeclareMathFun{\ltlMonSema}{}|\llbracket||\rrbracket|
\newcommand{\emptySet}{\varnothing}
\newcommand{\seqProj}[2]{{#1} |_{#2}}
\newcommand{\proj}[2]{\smash{{#1} |_{#2}}}
\mxDeclareMathFun{\fObsEquiv}{E}
\mxDeclareMathFun{\obsClosure}{X}
\mxDeclareMathSymbol{\initialControlStates}{Q_I}
\mxDeclareMathRel{\sysProd}{\otimes}
\mxDeclareMathRel{\sysProdRel}{\rightsquigarrow}
\mxDeclareMathSymbol{\sysProdSync}{A}
\mxDeclareMathSymbol{\anImpl}{\texttt{I}}
\mxDeclareMathSymbol{\aRefinedVTS}{\mathfrak{R}}
\mxDeclareMathFun{\actP}{\pi_{\tsActions}}
\mxDeclareMathFun{\anP}{\aGuard[][*]\mskip-1mu}
\mxDeclareMathSymbol{\possibilityVerdict}{V}
\mxDeclareMathSymbol{\possibilityProperty}{\psi}
\mxDeclareMathSymbol{\anObsWord}{\rho}
\mxDeclareMathSymbol{\basicEvents}{\mathsf{E}}
\mxDeclareMathSymbol{\boolExpr}{b}
\mxDeclareMathFun{\boolSema}{}|\llbracket||\rrbracket|
\newcommand{\labelSign}{\texttt{sign}}
\newcommand{\labelEncrypt}{\texttt{enc}}
\newcommand{\labelSend}{\texttt{send}}
\newcommand{\featSign}{\mathsf{s}}
\newcommand{\featEncrypt}{\mathsf{e}}
\newcommand{\configEmailBoth}{\aConfig[\featSign{\land}\featEncrypt]}
\newcommand{\configSign}{\aConfig[\featSign]}
\newcommand{\configEncrypt}{\aConfig[\featEncrypt]}
\newcommand{\ie}{i.e.\xspace}
\newcommand{\eg}{e.g.\xspace}
\newcommand{\cf}{cf.\xspace}
\newcommand{\etal}{et al.\xspace}
\newcommand{\faultPump}{\texttt{\textcolor{red!50!black}{pump\_fault}}}
\newcommand{\faultPumpShort}{F_p}
\newcommand{\faultElectric}{\texttt{\textcolor{red!50!black}{short\_circuit}}}
\newcommand{\faultElectricShort}{F_s}
\newcommand{\labelEspresso}{\texttt{request}}
\newcommand{\labelDispense}{\texttt{dispense}}
\newcommand{\labelBurn}{\texttt{burn}}
\newcommand{\Feat}{F}
\newcommand{\ValidFeat}{\mathsf{Conf}}
\newcommand{\anFTS}{\mathcal{F}}
\newcommand{\todoMX}[1]{\textcolor{red}{\textbf{MX:} #1}}
\newcommand{\cdnote}[1]{\textcolor{cyan}{\textbf{CD:} #1}}
\newcommand{\jc}{\texttt{lc}}
\tikzstyle{automaton}=[
\tikzstyle{lattice}=[
\newcommand{\configBoth}{\aConfig[\texttt{b}]}
\newcommand{\faultClasses}{\mathcal{F}}
\mxDeclareMathSymbol{\faultClass}{\mathsf{F}}
\mxDeclareMathSymbol{\aDiagnoser}{\mathfrak{D}}
\mxDeclareMathSymbol{\ltlThreeDomain}{\mathbb{B}_3}
\mxDeclareMathSymbol{\ltlFourDomain}{\mathbb{B}_4}
\mxDeclareMathSymbol{\ltlFiveDomain}{\mathbb{B}_5}
\newenvironment{eqinline}[0]{\vspace{-2pt}\begin{center}$}{$\vspace{-2pt}\end{center}}
\renewcommand{\cdnote}[1]{}
\renewcommand{\todoMX}[1]{}
\begin{document}

\title{
  Configuration Monitor Synthesis\thanks{
    This work was partially supported by the DFG under the projects
    TRR 248 (see {\footnotesize\url{https://perspicuous-computing.science}}, project ID 389792660) and
    EXC 2050/1 (CeTI, project ID 390696704, as part of Germany's Excellence Strategy) and by the 
    NWO through Veni grant VI.Veni.222.431.}
}

\author{
Maximilian A. Köhl%
\orcidlink{0000-0003-2551-2814}%
\inst{1}
\and
Clemens Dubslaff%
\orcidlink{0000-0001-5718-8276}%
\inst{2}
\and
Holger Hermanns
\orcidlink{0000-0002-2766-9615}
\inst{1}
}
\authorrunning{M. A. Köhl, C. Dubslaff, and H. Hermanns}

\institute{%
Saarland University, Saarland Informatics Campus, Saarbrücken, Germany \\
\email{\{koehl,hermanns\}@cs.uni-saarland.de}
\and
Eindhoven University of Technology, Eindhoven, The Netherlands\\
\email{c.dubslaff@tue.nl}
}

\maketitle

\begin{abstract}
The observable behavior of a system usually carries useful information about its internal state, properties, and potential future behaviors.
In this paper, we introduce \emph{configuration monitoring} to determine an unknown configuration 
of a running system based on observations of its behavior.
We develop a \emph{modular} and \emph{generic} pipeline to synthesize automata-theoretic \emph{configuration monitors} from a featured transition system model of the configurable system to be monitored.
The pipeline further allows synthesis under partial observability and network-induced losses as well as predictive configuration monitors taking the potential future behavior of a system into account.
Beyond the novel application of configuration monitoring, we show that our approach also generalizes and unifies existing work on runtime monitoring and fault diagnosis, which aim at detecting the satisfaction or violation of properties and the occurrence of faults, respectively.
We empirically demonstrate the efficacy of our approach with a case study on configuration monitors synthesized from configurable systems community benchmarks.
\end{abstract}

\section{Introduction}
\label{sec:intro}
Almost all modern systems are highly configurable, posing significant challenges for their design, analysis, and maintenance due to the huge amount of possible system configurations~\cite{CzaEis2000,ApeBatKas13a}.
While a system's configuration may be known at time of deployment, this can no longer be assumed at runtime where configurations often are not readily exposed~\cite{WotFriAnd18}.
For instance, configurations of legacy or physical components might be unknown to the running system itself---imagine a factory worker who physically configures a machine prior to a production step.
Configurations might also carry sensitive information and are hence disguised to increase security and privacy~\cite{CasCosMar08}.
Reconfigurations after deployment can be another reason for the configuration being unknown during runtime.
Further, configurations might be not reported due to system faults, \eg, when the system becomes 
unresponsive to queries regarding its configuration.
Since configurations heavily influence a system's properties, knowing them at runtime, however, 
is beneficial for many purposes:
Debugging and other corrective actions might require knowledge about the features a system has 
or how it is configured. 
Tools interacting with the system could adapt their behavior according to the system configuration, \eg, monitoring and diagnostic equipment~\cite{DBLP:conf/isola/DubslaffK22,KimBodBat10}.
Also to detect configuration vulnerabilities~\cite{RamSek02}, to determine information leakages~\cite{PelStrJur18},
or to reason about possible configuration-based attacks that compromise system security, information
about the system configuration is very valuable.

In this paper, we propose a formal and rigorous solution to the problem of determining 
possible configurations of a running system by solely observing its behavior. 
Specifically, we introduce \emph{configuration monitoring},
where system configuration verdicts are drawn from system observations, similar to property verdicts 
that are drawn in runtime monitoring~\cite{DBLP:journals/jlp/LeuckerS09}.

\paragraph{The Challenge: Configuration Monitor Synthesis.}
Configurable systems are typically modeled as \emph{featured transition systems} (FTSs), an extension of transition systems where transitions are guarded by sets of \emph{configurations}~\cite{DBLP:journals/tse/ClassenCSHLR13}.
As an illustrative example, \Cref{fig:email-model} depicts an FTS model of an email system with an encryption ($\featEncrypt$) and signing ($\featSign$) \emph{feature}, of which at least one must be enabled leading to three \emph{valid configurations}~\cite{Kang1990}: $\configEmailBoth$ for both $\featEncrypt$ and $\featSign$, $\configEncrypt$ for just $\featEncrypt$, and $\configSign$ for just $\featSign$.
\begin{wrapfigure}[14]{r}{5.1cm}
  \vspace{-1.6em}
  \begin{tikzpicture}[automaton,font=\footnotesize]
    \node[state,initial left] (initial) {};
    \node[state,below right=2cm of initial] (encrypted) {};
    \node[state,above right=2cm of encrypted] (signed) {};

    \draw (initial) edge[bend left= 20] node[above]{$\{ \configEmailBoth, \configSign \}$ : \labelSign} (signed)
      (signed) edge node[below]{$\{ \configSign \}$ : \labelSend} (initial)
      (signed) edge[bend left=15] node[below,rotate=45]{$\{ \configEmailBoth, \configEncrypt \}$ : \labelEncrypt} (encrypted)
      (encrypted) edge[bend left=50] node[below,rotate=-45]{$\set{\!\configEmailBoth, \configSign, \configEncrypt\!}$ : \labelSend} (initial)
      (initial) edge node[below,rotate=-45]{$\{ \configEncrypt \}$ : \labelEncrypt} (encrypted);
  \end{tikzpicture}
  \caption{Model of an email system with an encryption ($\featEncrypt$) and signing ($\featSign$) feature.}
  \label{fig:email-model}
\end{wrapfigure}
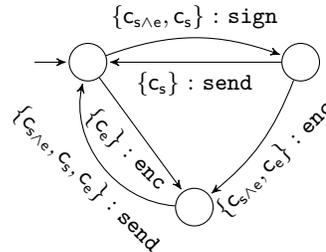
Depending on the configuration, emails are then signed, encrypted, or both before they are sent.
FTSs have been successfully utilized to model and analyze a variety of systems~\cite{Classen2010b,Dev17,DBLP:journals/tse/ClassenCSHLR13}.
While they have been studied in the literature quite extensively, there is no work on effective techniques to determine a given system's configuration by observing its behavior at runtime.
With this paper, we tackle the challenge of synthesizing \emph{configuration monitors} from FTS system models.
When fed with an \emph{observation sequence},
a configuration monitor should output a set of possible configurations.
For instance, a configuration monitor for the email example should output $\set{\configEmailBoth, \configSign}$ after observing $\labelSign$ and $\set{\configEmailBoth}$ after observing $\labelSign$ followed by $\labelEncrypt$.

In practice, determining the possible configurations of a system is aggravated by the fact that systems are usually only \emph{partially observable}, \ie, not all transitions can be observed, and observations may be \emph{lost}, \eg, if they are transmitted over an unreliable network.
The techniques we develope in this paper, allow the effective synthesis of configuration monitors accounting for these \emph{observational imperfections}.
Moreover, they also allow for \emph{predictive configuration monitoring} taking future system behavior into account under the assumption that the system keeps running---a common assumption for reactive systems.

\paragraph{The Approach: Verdict Transition Systems.}
For the effective synthesis of configuration monitors from FTSs,
we develop a \emph{generic} and \emph{modular} synthesis pipeline.
This synthesis pipeline is centered around \emph{verdict transition systems} (VTSs), a novel formalism generalizing \emph{lattice automata}~\cite{DBLP:conf/vmcai/KupfermanL07}.
A VTS represents a system that reads an observation sequence and outputs a \emph{verdict}.
A verdict can be a set of possible systems configurations and VTSs will thus serve as the target representation of configuration monitor synthesis.
The pipeline is generic in the sense that it makes only little assumptions about verdicts, making it useful beyond configuration monitoring.
It is modular in the sense that it consists of multiple building blocks which can be flexibly combined to meet the needs of an application, \eg, in terms of observational imperfections.

Beyond configuration monitoring, we show that VTSs unify and generalize existing work in the spectrum of automata-based runtime monitoring and fault diagnosis (\eg \cite{DBLP:conf/rv/0002LS07,DBLP:conf/fsttcs/BauerLS06,DBLP:journals/tosem/BauerLS11,DBLP:journals/sttt/FalconeFM12,DBLP:journals/tac/SampathSLST95,DBLP:journals/tcst/SampathSLST96}) into a coherent foundation.
While runtime monitoring aims at detecting the satisfaction or violation of properties, diagnosis aims at detecting faults.
Typically, both are integral components of larger systems, serving as mechanisms to pinpoint issues that require corrective action.
This may involve actions like initiating a safe shutdown or alerting a human operator.
As such, they are key elements for ensuring the safe operation of systems, with diverse practical applications (\eg \cite{DBLP:journals/jlp/LeuckerS09,DBLP:conf/rv/KallwiesLSSTW22,DBLP:conf/rv/GorostiagaS18,DBLP:conf/vmcai/BarringerGHS04,DBLP:conf/rv/HuangEZMLSR14}).
This safety critical role also entails the need of dealing with observational imperfections.
We are not aware of any attempts to unify and generalize runtime monitoring and fault diagnosis into a coherent foundation, \eg, such that results and algorithms can be shared and made useable for both.
Through VTSs as a unifying foundation, our pipeline is also useful to synthesize runtime monitors and fault diagnosers under partial observability and tolerant to delays and losses of observations, as well as for producing predictions by taking future system behavior into account.

\paragraph{Contributions and Structure.}
The main contributions of this paper are threefold.
First, we introduce VTSs forming the foundation of our synthesis approach and unifying existing 
work on diagnosis and runtime monitoring (\Cref{sec:foundation}).
Second, we develop modular building blocks for a generic VTS synthesis pipeline and applications to configuration monitoring and beyond (\Cref{sec:synthesis}).
Third, we empirically evaluate our implementation of our framework on configuration monitors synthesized from well-established configurable systems benchmarks (\Cref{sec:evaluation}).

\section{Preliminaries}
\label{sec:preliminaries}
For a finite alphabet $\anAlphabet$ of \emph{symbols}, let $\finiteSeqs{\anAlphabet\mskip-1mu}$ denote the set of finite words over $\anAlphabet$ and
let $\emptySeq \in \finiteSeqs{\anAlphabet\mskip-1mu}$ denote the empty word.
For  $\aWord, \aWord' \in  \finiteSeqs{\anAlphabet\mskip-1mu}$, let $\abs{\aWord}$ denote the length of $\aWord$ and $\seqConcat{\aWord}{\aWord'}$ denote the concatenation of $\aWord$ and $\aWord'$.
For $\aSubAlphabet \subseteq \anAlphabet$ and $\aWord \in \finiteSeqs{\anAlphabet\mskip-1mu}$, the \emph{$\aSubAlphabet$-projection} of $\aWord$, denoted by $\seqProj{\aWord}{\aSubAlphabet}$, removes all symbols not in $\aSubAlphabet$:
\begin{eqinline}
  \seqProj{\emptySeq}{\aSubAlphabet} := \emptySeq
  \qquad
  \seqProj{\left(\seqConcat{\aWord}{\aSymbol}\right)}{\aSubAlphabet} := \begin{cases}
    \seqConcat{\left(\seqProj{\aWord}{\aSubAlphabet}\right)}{\aSymbol} & \text{ if } \aSymbol \in \aSubAlphabet \\
    \seqProj{\aWord}{\aSubAlphabet} & \text{ otherwise}
  \end{cases}
\end{eqinline}
A \emph{language} $\aLang \subseteq \finiteSeqs{\anAlphabet\mskip-1mu}$ is a set of finite words.
Further, let $\projection{\aLang\!\!}{\aSubAlphabet} := \filter{\seqProj{\aWord}{\aSubAlphabet}}{\aWord \in \aLang}$ denote the \emph{$\aSubAlphabet$-projection} of the language $\aLang$.

A \emph{transition system} (TS) is a tuple $\aTS = \tuple{\tsStates, \tsActions, \tsInitialStates, \tsRel}$ comprising a finite set of \emph{states} $\tsStates$, of \emph{actions} $\tsActions$, and of \emph{initial states} $\tsInitialStates {\subseteq} \tsStates$, and a \emph{transition relation} $\tsRel \subseteq \tsStates {\times} \tsActions {\times} \tsStates$.
For $\tsStateSet {\subseteq} \tsStates$ and $\tsActionSet {\subseteq} \tsActions$, let $\smash{\tsPost(\tsStateSet, \tsActionSet)}$ be the set of $\tsActionSet$-successors of $\tsStateSet$, \ie,
$\smash{\tsPost(\tsStateSet, \tsActionSet) := 
\filter{\tsState' \in \tsStates}{\exists \tsState \in \tsStateSet, \tsAction \in \tsActionSet: \tuple{\tsState, \tsAction, \tsState'} \in \tsRel}}$.
We also write $\tsPost(\tsState, \tsAction)$ for $\tsPost(\set{\!\tsState\!}\!, \set{\!\tsAction\!})$.
We call $\aTS$ \emph{deterministic} iff $\abs{\tsInitialStates} = 1$ and 
 $\forall \tsState {\in} \tsStates, \tsAction {\in} \tsActions: \abs{\tsPost(\tsState, \tsAction)} \leq 1$.
An action $\tsAction {\in} \tsActions$ is \emph{enabled} in a state $\tsState {\in} \tsStates$ iff $\tsPost(\tsState, \tsAction) \neq \emptySet$.
$\aTS$ is \emph{action-enabled} iff every action
is enabled in every state.
For $\aWord {\in} \finiteSeqs{\tsActions}$, let $\tsExec(\aWord)$ denote the set of \emph{$\aWord$-reachable} states:
\begin{eqinline}
  \smash{\tsExec(\seqEmpty) := \tsInitialStates
  \qquad
  \tsExec(\seqConcat{\aWord}{\tsAction}) := \tsPost(\tsExec(\aWord), \set{\!\tsAction\!})}
\end{eqinline}
A word $\aWord {\in} \finiteSeqs{\tsActions}$ is \emph{accepted} by $\aTS$ iff $\tsExec(\aWord) {\neq} \emptySet$.
The \emph{language} $\fLang(\aTS)$ of $\aTS$ is the set of its \emph{traces},
i.e., the set of words accepted by $\aTS$.
An \emph{execution} of $\aTS$ is a sequence $\anExecution = (\tuple{\tsState[i], \tsAction[i], \tsState'[i]})_{i=1}^{n} \in \finiteSeqs{\tsRel}$ of transitions such that $\tsState[i] = \tsState'[i-1]$ for $1 < i \leq n$ and $\tsState[1] \in \tsInitialStates$ if $n > 0$.
The trace of the execution $\anExecution$, denoted by $\fTrace(\anExecution)$, is the sequence $(\tsAction[i])_{i=1}^{n}$ of its actions.
We denote the set of all executions of $\aTS$ by $\fExecs(\aTS)$.

Given a set $\Feat$ of features, a subset $\aConfig\subseteq \Feat$ is called a \emph{configuration}.
Systems can only be configured towards valid configurations $\ValidFeat\subseteq\fPowerSet(\Feat)$~\cite{Kang1990}.
Here, $\fPowerSet$ denotes the power set.
Behaviors of configurable systems are typically modeled as featured transition systems~\cite{DBLP:journals/tse/ClassenCSHLR13}.
Formally, an FTS $\anFTS = \tuple{\tsStates, \tsActions, \tsInitialStates, \tsRel, g}$ is a TS extended with a \emph{guard function} $g\colon \tsRel \to \fPowerSet(\ValidFeat) \setminus \set{\!\emptySet\!}$.
A transition $t \in \tsRel$ in $\anFTS$ can only be taken in systems with a configuration $\aConfig\in g(t)$. Formally, the semantics of an FTS $\anFTS$ for configuration $\aConfig \in \ValidFeat$ is a TS ${\anFTS}|_{\aConfig} := \tuple{\tsStates, \tsActions, \tsInitialStates, {\tsRel}|_{\aConfig}}$ where 
${\tsRel}|_{\aConfig} := \filter{t \in T}{\aConfig \in g(t)}$.

A \emph{join-semilattice} is a partially ordered set $\tuple{\aSet, \sqsubseteq}$ where every two-element subset $\set{\anElement, \anElement'} \subseteq \aSet$ has a least upper bound, denoted by $\anElement \join \anElement'$, and called the \emph{join} of $\anElement$ and $\anElement'$.
Analogously, a \emph{meet-semilattice} is a partially ordered set $\tuple{\aSet, \sqsubseteq}$ where every
subset $\set{\anElement, \anElement'} \subseteq \aSet$ has a greatest lower bound, denoted by $\anElement \meet \anElement'$, and called the \emph{meet} of $\anElement$ and $\anElement'$.
Every finite subset $\aSubset \subseteq \aSet$ of a join- or meet-semilattice $\tuple{\aSet, \sqsubseteq}$ has a join or meet, respectively.
For $\tuple{\aSet, \sqsubseteq}$, we refer to the maximal element as the \emph{top element}, denoted by $\top$, and to the minimal element as the \emph{bottom element}, denoted by $\bot$, if they exist.
A \emph{lattice} is a partially ordered set that is both a join- and a meet-semilattice.
We write $\jc(k)$ for the worst-case time complexity of computing the join/meet of $k$ elements.

Lattice automata, pioneered by Kupferman and Lustig, generalize Boolean acceptance of classical finite automata to the multi-valued setting~\cite{DBLP:conf/vmcai/KupfermanL07}, providing an automata-theoretic foundation for multi-valued reasoning about and verification of systems~\cite{DBLP:journals/fttcs/Kupferman22,10.1007/978-3-540-27836-8_26,10.1007/3-540-48683-6_25,10.1007/978-3-642-54830-7_15}.
For a lattice $\tuple{L, \sqsubseteq}$, a \emph{lattice automaton} (LA) is a tuple $\tuple{L, \Sigma, Q, Q_0, \delta, F}$ where $\Sigma$ is a finite alphabet, $Q$ is a finite set of states, $Q_0\colon Q \to L$, 
$\delta\colon Q \times \Sigma \times Q \to L$, and $F\colon Q \to L$.
A run of a lattice automaton on a word $\smash{w = (a_i)_{i=1}^n \in \finiteSeqs{\Sigma}}$ of length $n$ is a sequence $\smash{r = (q_i)_{i=0}^{n} \in \finiteSeqs{Q}}$ of $n + 1$ states.
Each such pair of a word and a run, induces a value of the lattice $L$:
\begin{eqinline}
  \mathsf{val}(w, r) := Q_0(q_0) \sqcap \big( \bigsqcap_{i=1}^n \delta(q_{i - 1}, a_i, q_i)\big) \sqcap F(q_n)
\end{eqinline}
By joining the values of all runs on a word, a value of $L$ is obtained for each word, \ie, $\mathsf{val}(w) := \sqcup \filter{\mathsf{val}(w, r)}{r \text{ is a run on } w}$.
We may now interpret $\mathsf{val}(w) = \top$ and $\mathsf{val}(w) = \bot$ as clear acceptance and rejection of a word $w$, respectively.
A lattice automaton is called \emph{simple}, if the image of $Q_0$ and $\delta$ is $\set{\!\top, \bot\!}$.
For simple lattice automata, $Q_0(q) = \top$ marks $q$ as an initial state and $\delta(q, a, q') = \top$ marks the existence of an $a$-labeled transition from $q$ to $q'$.
The theory of lattice automata has been developed in a series of publications~\cite{DBLP:conf/fossacs/HalamishK11,DBLP:conf/atva/HalamishK12,DBLP:conf/birthday/GonenK15}.

\section{Theoretical Foundation}
\label{sec:foundation}
As the underlying formal foundation of our synthesis approach, we introduce 
\emph{verdict transition systems} (VTSs), a generalization of lattice automata with a focus on system behaviors rather than language acceptance.
VTSs capture how \emph{verdicts} are obtained and evolve over time as new observations are made.

\begin{definition}
  \label{def:vts}
  Let $\tuple{\sVerdicts\!, \isMoreSpecific}\mskip-2mu$ be a join-semilattice, called \emph{verdict domain}.
  A VTS~$\,\aVTS$ over $\mskip2mu\sVerdicts$ is a tuple $\tuple{\sControlStates, \tsActions, \initialControlStates\!, \tsRel\!, \sVerdicts\!, \fVerdict}$ where $\tuple{\sControlStates, \tsActions, \initialControlStates\!, \tsRel}$ is a TS and $\fVerdict\colon\sControlStates\rightarrow\sVerdicts$ is a \emph{verdict function} assigning a verdict to each state.
\end{definition}
\begin{lemma}
  \label{remark:generalization}
  VTSs generalize lattice automata.
\end{lemma}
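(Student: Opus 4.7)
My plan is, given a lattice automaton $A = \tuple{L, \Sigma, Q, Q_0, \delta, F}$, to construct a VTS $\aVTS_A$ over $L$ (viewed as a join-semilattice by forgetting its meet structure at the top level) whose natural per-word verdict $\bigsqcup \filter{\fVerdict(q)}{q \in \tsExec(w)}$ coincides with $\mathsf{val}(w)$ for every $w \in \finiteSeqs{\Sigma}$. The core idea is to fold the accumulated meet along a run into the VTS state, so that reachability in $\aVTS_A$ enumerates exactly the runs of $A$ paired with the meet value they induce.

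Concretely, I would let $V \subseteq L$ be the (finite) meet-closure of $Q_0(Q) \cup \delta(Q \times \Sigma \times Q) \cup F(Q)$, take the state set to be $Q \times V$, declare initial states $\filter{(q, Q_0(q))}{q \in Q}$, and add a transition $((q, v), a, (q', v \sqcap \delta(q, a, q')))$ for all $q, q' \in Q$, $a \in \Sigma$, and $v \in V$. The verdict function is $\fVerdict((q, v)) := v \sqcap F(q)$ and the verdict domain is $L$. A straightforward induction on $|w|$ then shows that the states of $\aVTS_A$ reachable after reading $w = a_1 \ldots a_n$ are exactly the pairs $(q_n,\, Q_0(q_0) \sqcap \bigsqcap_{i=1}^n \delta(q_{i-1}, a_i, q_i))$ ranging over runs $r = q_0 \ldots q_n$ of $A$ on $w$; applying $\fVerdict$ yields precisely $\mathsf{val}(w, r)$, and joining over all reachable states reproduces $\bigsqcup_r \mathsf{val}(w, r) = \mathsf{val}(w)$.

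There is no deep conceptual obstacle here; the only delicate bookkeeping concerns $\bot$-valued states, which contribute nothing to any join and can be pruned or retained without affecting correctness. As a sanity check, on simple lattice automata---where $Q_0$ and $\delta$ take values in $\set{\top, \bot}$---the meet-closure collapses to $\set{\top, \bot}$, the second state component becomes redundant, and the construction reduces to a direct embedding of the underlying transition structure of $A$ with $\fVerdict$ agreeing with $F$, confirming that VTSs strictly subsume the lattice-automaton model.
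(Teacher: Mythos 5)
Your proof is correct, but it takes a different route from the paper's. The paper proves this lemma by citing Kupferman and Lustig's simplification theorem as a black box: every lattice automaton can be transformed into a \emph{simple} one (where $Q_0$ and $\delta$ take values only in $\{\top,\bot\}$), and a simple lattice automaton is then trivially read off as a VTS with $F$ as the verdict function; the paper adds the remark that VTSs are strictly more general because they only require a join-semilattice. You instead give an explicit construction that folds the accumulated meet of $Q_0$ and $\delta$ values into the state, i.e., you effectively reprove the simplification theorem via a product of $Q$ with the (finite) meet-closure of the relevant lattice values. This is a perfectly valid, self-contained alternative, and it is in fact essentially the same construction the paper itself deploys later as ``annotation tracking'' (the VTS with state set $\tsStates\times\sVerdicts$, verdicts $\aVerdict\sqcap f(\aState)$, and transitions accumulating $\aVerdict\sqcap g(\cdot)$), there again justified by appeal to the simplification theorem. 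What the paper's route buys is brevity at the cost of an external dependency; what yours buys is an explicit, checkable witness. Two small imprecisions in yours, neither fatal: the second state component only ever accumulates meets of $Q_0$- and $\delta$-values, so $F(Q)$ need not be included in the generating set of the meet-closure (and your sanity check that the closure ``collapses to $\{\top,\bot\}$'' for simple automata is only true of the \emph{reachable} second components, not of the closure as you defined it); and to fully match the lemma's intent you might note, as the paper does, that the converse embedding fails in general since a VTS's verdict domain need not admit meets.
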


\begin{proof}
  As established by Kupferman and Lustig, every lattice automaton can be \emph{simplified}, \ie, transformed into a simple one~\cite[Theorem 6]{DBLP:conf/vmcai/KupfermanL07}.
  Recall that for simple lattice automata, $Q_0(q) = \top$ marks initial states and $\delta(q, a, q') = \top$ marks the existence of transitions.
  Hence, any simple lattice automaton $\tuple{L, \Sigma, Q, Q_0, \delta, F}$ is trivially transformed to a VTS with $\sControlStates = Q$, $\tsActions = \Sigma$, $\initialControlStates = \filter{q}{Q_0(q) = \top}$, $\tsRel = \filter{t}{\delta(t) = \top}$, $\sVerdicts = L$, and $\fVerdict = F$.
  While the reverse transformation also applies if $\sVerdicts$ is a lattice, VTSs are more general as they only require the verdict domain $\sVerdicts$ to be a join-semilattice. 
  \qed
\end{proof}

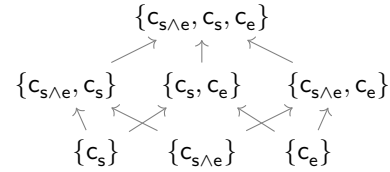
\begin{wrapfigure}[9]{r}{5.4cm}
  \vspace{-2em}
\begin{tikzpicture}[lattice,node distance=.5cm]
      \node (valid) {$\set{\!\configEmailBoth, \configSign, \configEncrypt\!}$};
      \node (v1)[below=.3cm of valid,xshift=1cm,anchor=north west] {$\set{\!\configEmailBoth, \configEncrypt\!}$};
      \node (v2)[below=.3cm of valid,xshift=-1cm,anchor=north east] {$\set{\!\configEmailBoth, \configSign\!}$};
      \node (v3)[below=.3cm of valid] {$\set{\!\configSign, \configEncrypt\!}$};

      \node (x1)[below=.3cm of v1,xshift=-.4cm] {$\set{\!\configEncrypt\!}$};
      \node (x2)[below=.3cm of v2,xshift=.4cm] {$\set{\!\configSign\!}$};
      \node (x3)[below=.3cm of v3] {$\set{\!\configEmailBoth\!}$};


      \path (valid) edge[gray] (v1)
            (valid) edge[gray] (v2)
            (valid) edge[gray] (v3)
            (v1) edge[gray] (x1)
            (v1) edge[gray] (x3)
            (v3) edge[gray] (x1)
            (v2) edge[gray] (x3)
            (v3) edge[gray] (x2)
            (v2) edge[gray] (x2);

\end{tikzpicture}  
  \vspace{-6pt}
  \caption{Configuration verdict domain for the email system (\Cref{fig:email-model}).}
  \label{fig:email-verdict-domain}
\end{wrapfigure}
The partial order $\isMoreSpecific$ of a verdict domain is assumed to order \emph{verdicts} $\aVerdict \in \sVerdicts$ based on their \emph{specificity}.
A verdict $\aVerdict[1]$ is said to be \emph{more specific than} a verdict $\aVerdict[2]$ iff $\aVerdict[1] \isMoreSpecific \aVerdict[2]$ and $\aVerdict[1] \neq \aVerdict[2]$.
Given two verdicts $\aVerdict[1], \aVerdict[2] \in \sVerdicts$, their join $\aVerdict[1] \sqcup \aVerdict[2]$ is the \emph{most-specific} verdict subsuming $\aVerdict[1]$ and $\aVerdict[2]$.
A \emph{configuration verdict} $\aConfigGuard \in \fPowerSet(\ValidFeat) \setminus \set{\!\emptySet\!}$ is a non-empty set of possible configurations
in which a system can exhibit the observed behaviors.
Naturally, a configuration verdict $\aConfigGuard[1]$ is more specific than another $\aConfigGuard[2]$ iff it considers less configurations possible, \ie, iff $\aConfigGuard[1] \subsetneq \aConfigGuard[2]$. 
\Cref{fig:email-verdict-domain} depicts the verdict domain for the email example (\Cref{fig:email-model}).
Thus, configuration monitors can be represented by VTSs over the \emph{configuration verdict domain} $\tuple{\fPowerSet(\ValidFeat) \setminus \set{\!\emptySet\!}, \subseteq}$:

\begin{definition}
  Given a set $\mskip2mu\ValidFeat\mskip-2mu$ of valid configurations, a \emph{configuration monitor} is a VTS with verdict domain $\tuple{\fPowerSet(\ValidFeat) \setminus \set{\!\emptySet\!}, \subseteq}$.
\end{definition}

In every state $\aControlState \in \sControlStates$, a VTS yields a verdict $\fVerdict(\aControlState)$.
As a VTS may be non-deterministic, a given trace $\aWord \in \fLang(\aVTS)$ may lead to multiple states with different verdicts.
To account for these possibilities, we leverage the inspiration from lattice automata for the join-semilattice setting.
\begin{definition}
  \label{def:trace-belief-join}
  Let $\aWord \in \fLang(\aVTS)$ be a trace of a VTS $\aVTS = \tuple{\sControlStates, \tsActions, \initialControlStates, \tsRel, \sVerdicts, \fVerdict}$.
  We define the verdict $\fExecVerdict(\aWord)$ \emph{yielded for} $\aWord$ as follows:
  \begin{equation}
    \smash{\fExecVerdict(\aWord) := \textstyle\bigsqcup \filter{\fVerdict(\aControlState)}{\aControlState \in \tsExec(\aWord)}}\vspace{-2pt}
    \label{eq:def:trace-verdict-join}
  \end{equation}
\end{definition}
Recall that $\tsExec(\aWord)$ is the set of $\aWord$-reachable states, which is finite and non-empty for every $\aWord \in \fLang(\aVTS)$.
Thus, the join in \eqref{eq:def:trace-verdict-join} must exist.
The verdict $\fExecVerdict(\aWord)$ can be interpreted as the most-specific verdict subsuming all non-deterministic possibilities.
Take the verdict domain of the email system as an example:
If both $\set{\configSign}$ and $\set{\configBoth}$ are non-deterministically possible verdicts, the VTS yields $\set{\configSign, \configBoth}$ indicating that both configurations are possible.

A configuration monitor $\aVTS$ targeting a configurable system modeled by an FTS $\anFTS$ 
should be both, \emph{sound} and \emph{complete}.
Soundness requires that for all words $\aWord\in\fLang(\aVTS)$ and \emph{verdict configurations} $\aConfig \in \fExecVerdict(\aWord)$ we have that $\aWord$ is a trace in $\anFTS|_{\aConfig}$, i.e., 
a sound configuration monitor never provides a verdict configuration that has no witness in the system model $\anFTS$.
Completeness ensures that for all witnesses there is also a verdict configuration, i.e., a configuration monitor $\aVTS$ is complete iff $\aConfig \in \fExecVerdict(\aWord)$ for every configuration $\aConfig\in\ValidFeat$ and every trace $\aWord \in \fLang(\anFTS|_{\aConfig})$.
Together, soundness and completeness gurarantee that the configuration monitor $\aVTS$ for the FTS $\anFTS$
always provides the most-specific configuration verdict, \ie, the smallest set of configurations that may result in the observed behaviors.

  In case of our email example, after observing $\labelSign$, the verdict provided by a most-specific configuration monitor 
  should not contain the configuration $\configEncrypt$ (soundness) but both,  $\configSign$ and $\configEmailBoth$ (completeness). This verdict indicates that the system has the sign feature $\featSign$ enabled and still leaves the possibility to have the encryption feature $\featEncrypt$ also enabled.
  If then $\labelSign$ is followed by $\labelSend$, the verdict should be $\set{\!\configSign\!}$, 
  while it should be $\set{\!\configEmailBoth\!}$ if followed by $\labelEncrypt$.

\paragraph{Monotonicity, Refinement, and Equivalence.}
A state $\aControlState \in \sControlStates$ of a VTS is called \emph{monotonic} iff $\fVerdict(\aControlState') \isMoreSpecific \fVerdict(\aControlState)$ for all $\aControlState' \in \tsPost(\set{\!\aControlState\!}\!,\, \tsActions)$.
That is, the verdicts of the state's successors are at least as specific as the verdict of the state itself.
A VTS is called monotonic iff all its states are monotonic.

We further define a refinement and an equivalence relation for VTSs:

\begin{definition}
  \label{def:bts-refinement}
  Let $\aVTS$ and $\aVTS'$ be two VTS over the same verdict domain $\tuple{\sVerdicts, \isMoreSpecific}$.
  We say that $\aVTS$ \emph{refines} $\aVTS'$, denoted by $\aVTS \preceq \aVTS'$,
  iff (i) their language is the same, \ie, $\smash{\fLang(\aVTS) = \fLang(\aVTS')}$, and (ii) $\aVTS$ yields at least as specific verdicts as $\aVTS'$, \ie, $\smash{\fExecVerdict(\aWord) \isMoreSpecific \fExecVerdict'(\aWord)}$ for all $\aWord \in \fLang(\aVTS)$.

  \label{def:vts-equivalence}
  We say that $\aVTS$ and $\aVTS'$ are \emph{verdict-equivalent} iff they refine each other, \ie, iff
  $\,\aVTS \preceq \aVTS'$ and $\aVTS' \preceq \aVTS$.
\end{definition} 
It is easy to see that two verdict-equivalent VTSs  $\aVTS$ and $\aVTS'$ yield exactly the same verdict for each trace, \ie, 
$\fExecVerdict(\aWord) = \fExecVerdict'(\aWord)$ for all $\aWord \in \fLang(\aVTS)$.

\subsection{A Unifying Foundation for Monitoring and Diagnosis?}
\label{sec:unification}
We now demonstrate that VTSs provide a unifying foundation for automata-based monitoring and fault diagnosis.
To this end, we rephrase paradigmatic instances from the literature \cite{DBLP:conf/rv/0002LS07,DBLP:conf/fsttcs/BauerLS06,DBLP:journals/tosem/BauerLS11,DBLP:journals/sttt/FalconeFM12,DBLP:journals/tac/SampathSLST95,DBLP:journals/tcst/SampathSLST96} with VTS concepts.

\begin{figure}[b]
  \centering
  \begin{tikzpicture}[automaton]
    \node[state,initial left] (initial) {$\mathsf{i}$};
    \node[state,right of=initial] (dispensing) {$\mathsf{d}$};
    \node[state,above right=4cm of dispensing, yshift=-3cm] (broken) {$\mathsf{p}$};
    \node[state,below right=4cm of dispensing, yshift=3cm] (burning) {$\mathsf{s}$};

    \draw (dispensing) edge node[above,rotate=7]{\faultPump} (broken)
          (dispensing) edge node[below,rotate=-7]{\faultElectric} (burning)
          (initial) edge[bend left=8] node[above,align=center]{\labelEspresso } (dispensing)
          (burning) edge[loop right] node[right]{\labelBurn} (burning)
          (broken) edge[loop right] node[right,align=left]{\labelEspresso} (broken)
          (dispensing) edge[bend left=8] node[below]{\labelDispense} (initial);
  \end{tikzpicture}
  \vspace{-4pt}
  \caption{Illustrative transition system model of a coffee machine.}
  \label{fig:coffee-model}
\end{figure}
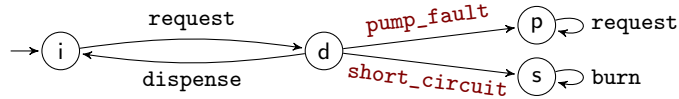

\begin{example}
  As an illustrative example, consider a simple model of a coffee machine (see \Cref{fig:coffee-model}).
  After receiving a coffee request, the machine either dispenses a coffee, or experiences one of two faults:
  Either the pump breaks and it continues accepting requests but ignores them, or there is a short circuit and it stops accepting requests and starts burning.
\end{example}

\paragraph{Diagnosis of Discrete-Event Systems.}
Diagnosis of discrete-event systems is concerned with synthesizing \emph{diagnosers} from system models~\cite{DBLP:journals/tac/SampathSLST95,DBLP:journals/tcst/SampathSLST96}.
To this end, the system is assumed to be modeled as a deterministic TS $\aTS = \tuple{\tsStates, \tsActions, \tsInitialStates, \tsRel}$, whose actions are partitioned into a set of \emph{observable} $\tsObsAct \subseteq \tsActions$ and \emph{unobservable} $\tsUnobsAct \subseteq \tsActions$ actions.
The latter includes a set $\tsFaultAct \subseteq \tsUnobsAct$ of \emph{fault actions}, partitioned into \emph{fault classes} $\smash{\faultClasses = \set{\faultClass[1], \ldots, \faultClass[n]}}$.

Assuming that faults may occur multiple times, a \emph{diagnoser} is a deterministic TS $\aDiagnoser = \tuple{\fPowerSet(\tsStates \times \fPowerSet(\faultClasses))\!, \tsObsAct, \initialControlStates, \tsRel}$.
Each state $\aControlState$
of $\aDiagnoser$ corresponds to a \emph{diagnosis} $\fDiagnosis(\aControlState)  = \filter{\aFaultClassSet}{\tuple{\tsState, \aFaultClassSet} \in \aControlState} \subseteq \fPowerSet(\faultClasses)$.
A diagnosis $\aDiagnosis \subseteq \fPowerSet(\faultClasses)$ is a set of sets of fault classes.
Each $\aFaultClassSet \in \aDiagnosis$ indicates a possibility that faults of the classes $\faultClass[i] \in \aFaultClassSet$ occurred.
Hence, a fault of class $\faultClass[i]$ \emph{certainly occurred} iff $\faultClass[i] \in \aFaultClassSet$ for all $\aFaultClassSet \in \aDiagnosis$ and it \emph{possibly occurred} iff $\faultClass[i] \in \aFaultClassSet$ for some  $\aFaultClassSet \in \aDiagnosis$~\cite[\cf~Def.~6]{DBLP:journals/tac/SampathSLST95}.
These considerations lead to an inherent specificity order: A diagnosis $\aDiagnosis$ is more specific than another diagnosis $\aDiagnosis'$ iff it considers less sets of fault classes possible, \ie, iff $\aDiagnosis \subsetneq \aDiagnosis'$.
As a result, we can cast $\aDiagnoser$ into a VTS $\aVTS[\aDiagnoser]$ over the verdict domain $\tuple{\fPowerSet(\fPowerSet(\faultClasses)), \subseteq}$ with verdict function $\fDiagnosis$ as given above.
Traditionally, a diagnoser $\aDiagnoser$ is constructed such that $\fLang(\aDiagnoser) = \projection{\fLang(\aTS)\!\mskip-2mu}{\tsObsAct}$.
Hence, for each trace $\aWord \in \fLang(\aTS)$ of the diagnosed system,
$\aVTS[\aDiagnoser]$
yields a diagnosis $\fExecVerdict(\seqProj{\aWord}{\tsObsAct})$ as verdict that indicates which faults occurred while taking only observable actions into account.
In general, VTSs obtained from diagnosers may be non-monotonic.
For further details on diagnosis, we refer to Sampath \etal \cite{DBLP:journals/tac/SampathSLST95}.

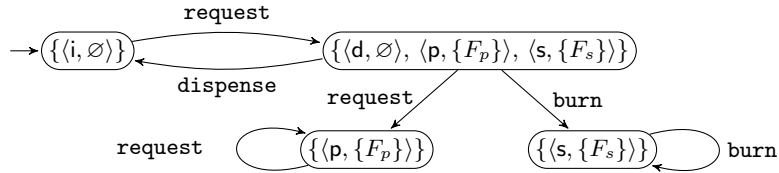
\begin{figure}[t]
  \centering
  \begin{tikzpicture}[automaton]
    \node[state,initial left] (initial) {$\{ \tuple{\mathsf{i}, \emptySet} \}$};
    \node[state,right=2.5cm of initial] (dispense) {$\{\tuple{\mathsf{d}, \emptySet}$, $\tuple{\mathsf{p}, \set{\!\faultPumpShort\!}}$, $\tuple{\mathsf{s}, \set{\!\faultElectricShort\!}}\}$};
    \node[state,below=.8cm of dispense,xshift=1.5cm] (burn) {$\{ \tuple{\mathsf{s}, \set{\!\faultElectricShort\!}} \}$};
    \node[state,below=.8cm of dispense,xshift=-1.5cm] (faulty) {$\{ \tuple{\mathsf{p}, \set{\!\faultPumpShort\!}} \}$};

    \path[->] (initial) edge[bend left=10] node[above,align=center]{$\labelEspresso$} ([yshift=3pt]dispense.west)
        ([yshift=-3pt]dispense.west) edge[bend left=10] node[below]{$\labelDispense$} (initial)
        (dispense) edge
        node[left]{$\labelEspresso$} (faulty)
        (faulty) edge[loop left] node[left,align=right]{$\labelEspresso$\ \ \ } (faulty)
        (dispense) edge node[right]{\ $\labelBurn$} (burn)
        (burn) edge[loop right] node[right]{$\labelBurn$} (burn);
\end{tikzpicture}
\vspace{-6pt}
  \caption{Diagnoser synthesized from the model of the coffee machine (\Cref{fig:coffee-model}).}
  \label{fig:coffee-diagnoser}
\end{figure}

\begin{example}
  \label{ex:coffee-diagnoser}
  \Cref{fig:coffee-diagnoser} depicts the diagnoser constructed from the model of the coffee machine (\Cref{fig:coffee-model}).
  Here, $\labelEspresso, \labelDispense, \labelBurn \in \tsObsAct$ are observable actions and $\faultPump,\faultElectric \in \tsFaultAct$ are fault actions.
  Further, each fault action forms its own fault class $\faultPumpShort$ and $\faultElectricShort$, respectively.
  In states $\set{\tuple{\mathsf{p}, \set{\!\faultPumpShort\!}}}$ and $\set{\tuple{\mathsf{s}, \set{\!\faultElectricShort\!}}}$ of the diagnoser, $\faultPump$ and $\faultElectric$ certainly occurred, respectively, while in state $\set{\tuple{\mathsf{i}, \emptySet}}$ no fault possibly occurred.
\end{example}

\paragraph{LTL Runtime Monitoring.}
\emph{Linear temporal logic} (LTL)~\cite{DBLP:conf/focs/Pnueli77} is used to express properties over infinite words, extending propositional logic with temporal operators.
We use the notation $\ltlEventually \phi$ for \emph{eventually $\anLtlFormula$}, $\ltlAlways \phi$ for \emph{globally $\anLtlFormula$}, and $\ltlNext \anLtlFormula$ for \emph{next $\anLtlFormula$}. 
For an LTL property $\anLtlFormula$, LTL runtime monitoring aims at deciding from a finite prefix of an ongoing run of a system whether $\anLtlFormula$ is satisfied or violated, independent from the future behaviors of a system~\cite{DBLP:conf/fsttcs/BauerLS06,DBLP:journals/tosem/BauerLS11,DBLP:journals/sttt/FalconeFM12,DBLP:conf/rv/0002LS07}.
To adapt the classical LTL semantics to finite prefixes of ongoing runs, different \emph{truth domains} have been proposed \cite{DBLP:journals/sttt/FalconeFM12}.
\LTLThree monitoring uses $\ltlThreeDomain = \set{\monTrue, \monUnknown, \monFalse}$ and, for a prefix $\aWord$ of an ongoing run,
yields $\monTrue$ iff all infinite continuations of $\aWord$ satisfy $\anLtlFormula$, yields $\monFalse$ iff all infinite continuations of $\aWord$ violate $\anLtlFormula$, and yields $\monUnknown$ otherwise \cite[Def.~1]{DBLP:conf/fsttcs/BauerLS06}.
Naturally, $\monTrue$ and $\monFalse$ are more specific than $\monUnknown$ since they represent definite truth values while $\monUnknown$ does not.
Every \LTLThree monitor~\cite[Def.~2]{DBLP:conf/fsttcs/BauerLS06} is a TS which 
can be cast into an action-enabled, deterministic, and monotonic VTS over $\ltlThreeDomain$.

\begin{figure}[t]
  \centering
  \begin{tikzpicture}[automaton]
    \node[state,initial above] (initial) {\texttt{?}};
    \node[state,right=2.3cm of initial] (dispense) {\texttt{?}};
    \node[state,right=2.3cm of dispense] (violation) {\texttt{f}};

    \draw (initial) edge[loop left] node[left,align=right]{$\labelDispense$} (initial)
      (initial) edge[bend left=10] node[above]{$\labelEspresso$} (dispense)
      (dispense) edge[bend left=10] node[below]{$\labelDispense$} (initial)
      (dispense) edge node[above]{$\labelEspresso$} (violation)
      (violation) edge[loop right] node[right,align=left]{$\labelDispense$\\ $\labelEspresso$} (violation);

  \end{tikzpicture}
  \vspace{-4pt}
  \caption{$\text{LTL}_3$ monitor for the LTL property $\anLtlFormula = \ltlAlways\left( \labelEspresso \boolImplies \ltlNext \labelDispense \right)$.}
  \label{fig:rv-ltl-monitor}
\end{figure}

\begin{example}
  \label{ex:rv-ltl-property}
  \Cref{fig:rv-ltl-monitor} depicts an \LTLThree monitor for the property $\anLtlFormula$ that every coffee request is met in the next step, \ie, $\anLtlFormula = \ltlAlways\left( \labelEspresso \boolImplies \ltlNext \labelDispense \right)$.
  Note that we can never be sure that this property is satisfied because it may always be violated by the unknown future.
  Hence, there is no state with verdict $\monTrue$.
\end{example}

\begin{wrapfigure}[9]{r}{2.4cm}
  \vspace{-0.8cm}
  \centering
  \centering
\noindent
\begin{tikzpicture}[lattice,node distance=.35cm]
        \node (unknown) {$?$};
        \node[below left=of unknown,xshift=.1cm] (topc) {$\currentlyTrue$};
        \node[below right=of unknown,xshift=-.1cm] (bottomc) {$\currentlyFalse$};
        \node[below=.3cm of topc] (top) {$\texttt{t}$};
        \node[below=.3cm of bottomc] (bottom) {$\texttt{f}$};

        \path (unknown) edge[gray] 
        (topc)
                (unknown) edge[gray] 
                (bottomc)
                (topc) edge[gray] (top)
                (bottomc) edge[gray] (bottom);
\end{tikzpicture}
\vspace{-.5em}
  \caption{\hfill \LTLFour verdict domain.}
  \label{fig:bool-five-domain}
\end{wrapfigure}
To deal with properties like $\ltlAlways\left( \texttt{request} \boolImplies \ltlEventually \texttt{response} \right)$, \ie, that every request is met eventually, Bauer \etal introduce \LTLFour monitoring using the truth domain $\ltlFourDomain = \set{\monTrue, \currentlyTrue, \currentlyFalse, \monFalse}$, where $\currentlyTrue$ denotes \emph{possibly true} and $\currentlyFalse$ denotes \emph{possibly false} \cite{DBLP:conf/rv/0002LS07,DBLP:journals/sttt/FalconeFM12}.
As such properties can always be satisfied and violated by the unknown future, $\LTLThree$ monitoring would always yield $\monUnknown$.
For further details 
we refer to Falcone \etal \cite{DBLP:journals/sttt/FalconeFM12}.
To technically accommodate \LTLFour monitoring, 
we introduce a fifth verdict $\monUnknown$ to obtain the verdict domain depicted in \Cref{fig:bool-five-domain}.
Without this fifth verdict, $\currentlyTrue$ and $\currentlyFalse$ would not have a least upper bound.
Analogously to \LTLThree, every \LTLFour monitor can be cast into an action-enabled and deterministic VTS.
This VTS may be non-monotonic since  \LTLFour monitors may toggle between $\currentlyTrue$ and $\currentlyFalse$.

\paragraph{A Unifying Foundation.}
We sketched how VTSs can
serve as a unifying foundation for paradigmatic examples of existing work in the spectrum of automata-based monitoring and diagnosis (\eg \cite{DBLP:conf/fdl/Morin-AlloryB06,DBLP:conf/case/AcarS15,DBLP:conf/fmics/ColomboPS08,DBLP:conf/vmcai/BarringerGHS04,DBLP:journals/automatica/CarvalhoMBL13,DBLP:conf/iccad2/MhamdiNDM17,DBLP:journals/jlp/LeuckerS09}).
For approaches not based on automata, like stream-based monitoring (\eg \cite{DBLP:conf/time/DAngeloSSRFSMM05,DBLP:conf/sbmf/ConventHLS0T18,DBLP:conf/cav/BaumeisterFSST20,DBLP:conf/rv/GorostiagaS18}), we leave it to future work to explore how the generic concept of VTSs may transfer.

\section{Generic Building Blocks for VTS Synthesis}\label{sec:synthesis}
The VTS foundations presented in the last section enable us to approach the challenge of configuration monitor synthesis in a generic way:
From a system model annotated with verdicts, synthesize a VTS that yields most-specific verdicts 
when fed with (imperfect) system observations.
For featured transition systems, this would provide sound and complete configuration monitors.
In this section, we establish an automata-based solution providing modular building blocks for a various problem instances.
These blocks enable handling of verdict predictions, realistic observational imperfections, and enable
efficient VTS implementations. Together, they form a generic and modular VTS synthesis pipeline  (\Cref{fig:pipeline})
where each building block maintains most-specificity of generated VTSs.

\begin{figure}[h]
  \centering
\begin{tikzpicture}[font=\footnotesize,node distance=.6cm,shorten >=1pt, >=stealth']
  \node[align=center,draw,minimum width=2.3cm,minimum height=.9cm] (tracking) {Model-Based \\ Construction};
  \node[draw,align=center,minimum height=.9cm] (lookahead) at ($(tracking.east) + (1.6cm,0)$) {Lookahead \\Refinement};
  \node[right=.5cm of lookahead,draw,align=center,minimum height=.9cm] (projection) {Observability \\ Adjustment};
  \node[right=.5cm of projection,align=center,draw,minimum height=.9cm] (finalize) {Finalization};

  \node[below=0pt of tracking,scale=.9] {\Cref{sec:synth-construction}};
  \node[below=0pt of projection,scale=.9] {\Cref{sec:synth-trans}};
  \node[below=0pt of lookahead,scale=.9] {\Cref{sec:lookahead-refinement}};     
  \node[below=0pt of finalize,scale=.9] {\Cref{sec:synth-finalization}};

  \node[draw,circle,inner sep=2pt] (project_q) at ([xshift=-.35cm]lookahead.west)  {};

  \node[left=.3cm of tracking] {$\vphantom{A}\aTS$};
  \node[right=.3cm of finalize] {$\vphantom{A}\anImpl$};
  
  \draw[shorten >=0pt] (tracking) -- (project_q);
  \draw[->] (project_q) -- (lookahead);
  \draw[->] (lookahead) -- (projection);
  \draw[->] (projection) -- (finalize);
  \draw[->] ([xshift=-.35cm]tracking.west) -- (tracking.west);
  \draw[->] (finalize.east) -- ([xshift=.35cm]finalize.east);
  \draw[->] (project_q) -- ([yshift=.65cm]project_q.north) -| (projection.north);
\end{tikzpicture}
\caption{Generic pipeline for model-based VTS synthesis.}
\vspace{-2em}
\label{fig:pipeline}
\end{figure}
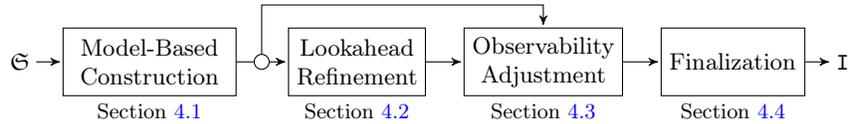

\mxDeclareMathFun{\fStateVerdictAnnotation}{\mathsf{F}}
\mxDeclareMathFun{\fGroundVerdict}{\gamma}
\mxDeclareMathSymbol{\anExecution}{\mathsf{e}}

\subsection{Model-Based VTS Construction}
\label{sec:synth-construction}
Existing constructions for runtime monitors and diagnosers can be cast into VTSs (see \Cref{sec:unification}), however, they are specific to their respective applications and, in the case of monitors, typically do not take a system's model into account.
We here develop a fully generic approach, coined \emph{annotation tracking}, based on \emph{verdict annotations}, which can be used for configuration monitoring as well as runtime monitoring and diagnosis.
Formally, for a TS $\aTS = \tuple{\tsStates, \tsActions, \tsInitialStates\!, \tsRel}$ and a verdict domain $\tuple{\sVerdicts, \sqsubseteq}$, a verdict annotation is a pair of functions $f\colon \tsStates \to \sVerdicts$ and $g\colon \tsRel \to \sVerdicts$ assigning verdicts to states and transitions, respectively.

In the configurable systems domain, FTSs are in fact already TSs whose transitions are annotated with verdicts of the configuration verdict domain $\tuple{\fPowerSet(\ValidFeat){\setminus}\set{\!\emptySet\!}, \subseteq}$ (\cf \Cref{sec:preliminaries}).
For runtime monitoring, a way to generate verdict annotations is to use off-the-shelf model checking~\cite{DBLP:journals/csur/ClarkeW96} to determine whether a state satisfies or violates a given property and then assign a matching verdict from the $\BoolThree$ verdict domain to each state.
For diagnosis, we may assume the transitions with fault actions to be annotated with respective sets of fault classes (we discuss this in more detail later in \Cref{sec:putting-everything-together}).

For each execution $\anExecution \in \fExecs(\aTS)$ of the TS $\aTS$, we define a verdict $\fGroundVerdict(\anExecution)$ by
\bgroup\small
\begin{equation}
  \fGroundVerdict(\emptySeq) := \bigsqcup \filter{f(\tsState)}{\tsState \in \tsInitialStates}
  \quad
  \fGroundVerdict(\left(\tuple{\tsState[i], \tsAction[i], \tsState'[i]}\right)_{i=1}^n) :=
  f(\tsState'[n]) \sqcap \left(\bigsqcap g(\tuple{\tsState[i], \tsAction[i], \tsState'[i]})\right)
  \label{eq:ground-verdict}
\end{equation}
\egroup
if the meet of the respective verdicts exists.
In cases where it does not exist, we ignore the respective execution and leave $\fGroundVerdict$ undefined.
For instance, for FTSs, this means that executions are ignored whose transitions do not share common configurations, as they cannot arise within any valid configuration.

We now aim to construct a VTS $\aVTS$ that yields most-specific verdicts under the idealized assumption that all actions of $\aTS$ can be observed, \ie:
\vspace{-2pt}
\begin{equation}
  \fExecVerdict(\aWord) = \bigsqcup \filter{\fGroundVerdict(\anExecution)}{\anExecution \in \fExecs(\aTS) \text{ with } \aWord = \fTrace(\anExecution) \text{ s.t. } \fGroundVerdict(\anExecution) \text{ is defined}}
  \vspace{-2pt}
  \label{eq:tracking-verdict}
\end{equation}
In words, the VTS $\aVTS$ should produce, for a given trace $\aWord$, the most-specific verdict that subsumes the verdicts $\fGroundVerdict(\anExecution)$ of all executions $\anExecution$ whose trace $\fTrace(\anExecution)$ is $\aWord$ and for which $\fGroundVerdict(\anExecution)$ is defined.
To construct such a VTS, we exploit results from lattice automata~\cite{DBLP:conf/vmcai/KupfermanL07}:
By adding a \emph{sentinel} bottom verdict $\fakeVerdict$ to the verdict domain $\tuple{\sVerdicts, \sqsubseteq}$, we obtain a lattice $L$.
The construction proceeds by first constructing a lattice automaton over $L$, then applying simplification to obtain a simple LA (Theorem~6~\cite{DBLP:conf/vmcai/KupfermanL07}), and finally converting the simple LA to a VTS according to \Cref{remark:generalization} while additionally stripping $\fakeVerdict$-labeled states.
We amalgamate those steps, except the stripping of $\fakeVerdict$-labeled states, into a single definition:

\begin{definition}
  \label{def:annotation-traking-with-sentinel}
  Given the system model $\aTS = \tuple{\tsStates, \tsActions, \tsInitialStates\!, \tsRel}$, the verdict domain $\tuple{\sVerdicts, \sqsubseteq}$, and verdict annotation $f\colon \tsStates \to \sVerdicts$ and $g\colon \tsRel \to \sVerdicts$, we define:
 \begin{eqinline}
    \aVTS[][\fakeVerdict] = \tuple{\tsStates \times \sVerdicts, \tsActions, \tsInitialStates {\times} \set{\!\top\!}, \transitionRel', \sVerdicts\, {\cup} \set{\!\fakeVerdict\!}, \fVerdict}
 \end{eqinline}
 where $\fVerdict(\tuple{\aState, \aVerdict}) = \aVerdict \sqcap f(\aState)$ and $\tuple{\tuple{\aState, \aVerdict}, \tsAction, \tuple{\aState', \aVerdict'}} \in \mathord{\transitionRel'}$ iff there exists a transition $\tuple{\aState, \tsAction, \aState'} \in \transitionRel$ in the system model and $\aVerdict' = \aVerdict \sqcap g(\tuple{\aState, \tsAction, \aState'})$.
\end{definition}

\noindent Note that the meet $\sqcap$ used in the definitions of $\fVerdict$ and $\aVerdict'$ is defined as we extended the verdict domain with $\fakeVerdict$, \ie, it is a lattice.
Stripping $\fakeVerdict$-labeled states from $\aVTS[][\fakeVerdict]$ to obtain the VTS $\aVTS$ over the original verdict domain $\tuple{\sVerdicts, \sqsubseteq}$ is trivial.

\begin{restatable}{theorem}{thmTrackingAccuracy}
  \label{thm:tracking-accuracy}
  For each trace $\aWord \in \fLang(\aVTS)$, we have:
  \vspace{-2pt}
  \begin{equation*}
    \fExecVerdict(\aWord) = \bigsqcup \filter{\fGroundVerdict(\anExecution)}{\anExecution \in \fExecs(\aTS) \text{ with } \aWord = \fTrace(\anExecution) \text{ s.t. } \fGroundVerdict(\anExecution) \text{ is defined}}
    \tag{\ref{eq:tracking-verdict}}
    \vspace{-2pt}
  \end{equation*}
  Furthermore, we have $\fLang(\aVTS) = \filter{\fTrace(\anExecution)}{\anExecution \in \fExecs(\aTS) \text{ s.t. } \fGroundVerdict(\anExecution) \text{ is defined}}$.
\end{restatable}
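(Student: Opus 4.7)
The plan is to proceed by a two-level argument: first establish a correspondence between reachable states of $\aVTS[][\fakeVerdict]$ and executions of $\aTS$, and then relate the verdict function and the join in \eqref{eq:def:trace-verdict-join} to the ground verdicts and the right-hand side of \eqref{eq:tracking-verdict}. The stripping of $\fakeVerdict$-labeled states is then handled as a clean post-processing step.

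First, I would prove by induction on $\abs{\aWord}$ the following reachability invariant for $\aVTS[][\fakeVerdict]$ (before stripping): $\tuple{\aState, \aVerdict} \in \tsExec(\aWord)$ if and only if there exists an execution $\anExecution = (\tuple{\tsState[i], \tsAction[i], \tsState'[i]})_{i=1}^n \in \fExecs(\aTS)$ with $\fTrace(\anExecution) = \aWord$ and $\tsState'[n] = \aState$ (taking $\aState \in \tsInitialStates$ and $n=0$ for the empty trace), such that $\aVerdict = \top \sqcap \bigsqcap_{i=1}^n g(\tuple{\tsState[i], \tsAction[i], \tsState'[i]})$. The base case follows from $\initialControlStates = \tsInitialStates \times \set{\top}$, and the step uses the transition rule of \Cref{def:annotation-traking-with-sentinel} which, by construction, updates the accumulated verdict exactly by meeting with the annotation $g$ of the chosen TS-transition. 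Because the extended domain $\sVerdicts \cup \set{\fakeVerdict}$ is a lattice, this meet is always defined.

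Second, I would apply the definition $\fVerdict(\tuple{\aState, \aVerdict}) = \aVerdict \sqcap f(\aState)$ on top of the invariant to conclude that the multiset of verdicts $\filter{\fVerdict(\aControlState)}{\aControlState \in \tsExec(\aWord)}$ is precisely $\filter{\fGroundVerdict(\anExecution)}{\anExecution \in \fExecs(\aTS),\, \fTrace(\anExecution) = \aWord}$, where for executions with an undefined ground verdict (i.e.\ those whose meet collapses to $\fakeVerdict$) the reached state carries verdict $\fakeVerdict$. The stripping step, which removes all $\fakeVerdict$-labeled states to obtain $\aVTS$, removes exactly those entries. Taking the join over what remains and using the definition of $\fExecVerdict$ yields \eqref{eq:tracking-verdict}; the language claim follows immediately since $\aWord \in \fLang(\aVTS)$ iff some non-$\fakeVerdict$ state is reachable after $\aWord$, iff some execution with trace $\aWord$ has a defined ground verdict.

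The main obstacle will be the interaction between stripping and the join: after removing $\fakeVerdict$-states, the remaining reachable set in $\aVTS$ might differ from the one in $\aVTS[][\fakeVerdict]$ if a $\fakeVerdict$-state had non-$\fakeVerdict$ successors reachable only through it. I would argue that this cannot happen because the verdict component is monotonically decreasing along transitions (each step meets with $g(\cdot)$), so once a state carries $\fakeVerdict$ every successor does too; hence stripping preserves exactly the set of reachable non-$\fakeVerdict$ states and the join in \eqref{eq:def:trace-verdict-join} is unaffected. The remainder is bookkeeping: the empty meet in \eqref{eq:ground-verdict} equals $\top$, matching the initial verdict component in $\aVTS[][\fakeVerdict]$, so the induction starts cleanly and no boundary case is lost.
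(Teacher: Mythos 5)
Your overall strategy---an induction on $\abs{\aWord}$ showing that the states $\tuple{\aState,\aVerdict}\in\tsExec(\aWord)$ of $\aVTS[][\fakeVerdict]$ are exactly the pairs of an end state and accumulated guard-meet of an execution with trace $\aWord$, followed by pushing $\fVerdict$ and the join through this correspondence---is the right one, and matches the route the construction is designed to admit (the paper frames it via Kupferman and Lustig's simplification theorem for lattice automata, but the content is the same direct product argument). The gap sits exactly at the step you yourself flag as the main obstacle, and your resolution of it is incorrect. You claim that once a state ``carries $\fakeVerdict$'' all its successors do, because the second component decreases monotonically. That is true of the \emph{component} $\aVerdict$: $\fakeVerdict$ is absorbing for $\sqcap$, so once the accumulated guard-meet collapses it stays collapsed (and since the state space is $\tsStates\times\sVerdicts$, such states do not even exist). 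But stripping removes \emph{$\fakeVerdict$-labeled} states, i.e.\ those with $\fVerdict(\tuple{\aState,\aVerdict})=\aVerdict\sqcap f(\aState)=\fakeVerdict$, and this label is \emph{not} monotone along transitions, because $f$ varies from state to state: one can have $\aVerdict\sqcap f(\aState)=\fakeVerdict$ while the successor satisfies $\aVerdict'\sqcap f(\aState')\neq\fakeVerdict$ for $\aVerdict'=\aVerdict\sqcap g(\cdot)\isMoreSpecific\aVerdict$.

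Concretely, take $\sVerdicts=\set{\set{1},\set{2},\set{1,2}}$ ordered by inclusion, a chain $s_0\xrightarrow{a}s_1\xrightarrow{b}s_2$ with both guards equal to $\set{2}$, $f(s_0)=f(s_2)=\set{1,2}$ and $f(s_1)=\set{1}$. Then $\tuple{s_1,\set{2}}$ is $\fakeVerdict$-labeled, yet its successor $\tuple{s_2,\set{2}}$ carries verdict $\set{2}$ and is reachable only through it; the length-two execution has defined ground verdict $\set{2}$, so the theorem requires $ab\in\fLang(\aVTS)$ with $\fExecVerdict(ab)=\set{2}$---deleting $\tuple{s_1,\set{2}}$ destroys both claims. (This cannot happen in the paper's applications, where states are annotated with $\top$, e.g.\ $\ValidFeat$ for FTSs, but the theorem is stated for arbitrary $f$.) Your proof must therefore separate the two sources of $\fakeVerdict$: states whose \emph{component} would be $\fakeVerdict$ never arise and your absorption argument disposes of them, whereas states that are $\fakeVerdict$-labeled only through $f$ must be retained in the transition structure and merely excluded from the join and from acceptance (i.e.\ $\aWord\in\fLang(\aVTS)$ iff $\tsExec(\aWord)$ contains a non-$\fakeVerdict$-labeled state, and the join in \eqref{eq:def:trace-verdict-join} ranges over those). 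With the stripping repaired in this way your induction goes through verbatim; as written, the argument does not establish the language claim.
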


Recall that FTSs are TSs whose transitions are annotated with verdicts of the domain $\tuple{\fPowerSet(\ValidFeat){\setminus}\set{\!\emptySet\!}, \subseteq}$.
In addition, we assume states to be annotated with $\ValidFeat$, as they are not constrained by configurations.
As per \Cref{thm:tracking-accuracy}, the presented annotation tracking construction produces a sound and complete configuration monitor (\cf \Cref{sec:preliminaries}).

The worst-case time complexity of the construction is $\fBigO(\abs{\sControlStates} \cdot \abs{\sVerdicts} \cdot D \cdot \jc(D))$ where $D$ is the maximal outdegree of $\aTS$.

\paragraph{VTS Specialization.}
Instead of annotating a system model directly, one may also use another action-enabled VTS, \eg, synthesized with third-party techniques, 
and \emph{specialize} it for a given system model.
When combined with the other building blocks, this paves the way for accommodating observational imperfections and enabling predictions based on the model and a VTS obtained with other synthesis techniques (see \Cref{sec:unification}).

\begin{definition}
Let $\aTS = \tuple{\tsStates, \tsActions, \tsInitialStates, \tsRel}$ be a TS and $\aVTS = \tuple{\sControlStates, \sysProdSync, \initialControlStates, \tsRel', \sVerdicts, \fVerdict}$ be an action-enabled VTS with $\sysProdSync\subseteq\tsActions$.
Let $\aVTS[\aTS]\! := \tuple{\tsStates {\times} \sControlStates, \tsActions, \tsInitialStates {\times}\initialControlStates, \sysProdRel, \sVerdicts, \fVerdict'}$ be a VTS with $\fVerdict'(\tuple{\aState, \aControlState}) = \fVerdict(\aControlState)$, and where $\tuple{\tuple{\tsState, \aControlState}, \tsAction, \tuple{\tsState', \aControlState'}} \in \mathbin{\sysProdRel}$ iff (1) $\tsAction \in \sysProdSync$, $\tuple{\tsState, \tsState'} \in \tsRel$, and $\tuple{\aControlState, \aControlState'} \in \tsRel'$, or (2) $\tsAction \not\in \sysProdSync$, $\tuple{\tsState, \tsState'} \in \tsRel$, and $\aControlState = \aControlState'$.
\end{definition}
The VTS $\aVTS[\aTS]$ specializes $\aVTS$ for the system model $\aTS$.
Essentially, $\aVTS[\aTS]$ follows a product construction of $\aVTS$ and $\aTS$, synchronizing over the shared actions $\sysProdSync$.
As $\aVTS$ is action-enabled, the synchronization never blocks and thus, $\fLang(\aVTS[\aTS]) = \fLang(\aTS)$.
Further, $\smash{\fExecVerdict[\aVTS[\aTS]\!](\aWord) = \fExecVerdict[\aVTS](\aWord)}$ for each $\aWord \in \fLang(\aVTS[\aTS])$.
The worst-case time complexity for this specialization construction is $\fBigO(\abs{\sControlStates} \cdot \abs{\tsStates} \cdot \abs{\transitionRel} \cdot \abs{\transitionRel'})$.

\subsection{Most-Specific Predictions}
\label{sec:lookahead-refinement}
Under the additional assumption that the system keeps running, we can further refine the verdicts yielded by a VTS by taking into account possible future behaviors.
In practice, this can be highly valuable for identifying issues earlier, however, it is only justified if the system indeed keeps running.
\emph{Lookahead refinement} refines verdicts of monotonic states $\aControlState \in \sControlStates$ of a VTS 
$\aVTS = \tuple{\sControlStates, \tsActions, \initialControlStates, \tsRel, \sVerdicts, \fVerdict}$
by taking into account future system behaviors starting from $\aControlState$. 
To this end, we define a \emph{lookahead refined-verdict function} $\fVerdict[i]$  for $i \in \Nat$ by:
\begin{eqinline}
  \fVerdict[0](\aControlState) := \fVerdict(\aControlState) \qquad \fVerdict[i + 1](\aControlState) := \begin{cases}
    \bigsqcup_{\aControlState' \in \tsPost(\aControlState)} \fVerdict[i](\aControlState') & \text{if } \forall \aControlState' \in \tsPost(\aControlState): \fVerdict(\aControlState') \isMoreSpecific \fVerdict(\aControlState) \\
    \fVerdict(\aControlState) & \text{otherwise}
  \end{cases}
\end{eqinline}
That is, $\smash{\fVerdict[i + 1]}$ refines the verdict of each monotonic state $\aControlState$ by joining the verdicts of $\aControlState$'s successors from the previous iteration $\smash{\fVerdict[i]}$.
Note that $\fVerdict[i]$ reaches a fixpoint after at most $\smash{\abs{\sControlStates}}$ iterations, ensuring verdicts have propagated from all successors.
Using this fixpoint, we obtain the \emph{lookahead refined} VTS $\aVTS'$:
\begin{eqinline}
  \smash{\aVTS' = \tuple{\sControlStates, \tsActions, \initialControlStates, \tsRel, \sVerdicts, \smash{\fVerdict[\abs{\sControlStates}]}}}
\end{eqinline}
It is easy to see that $\aVTS'$ indeed refines $\aVTS$, \ie, $
\smash{\aVTS' \preceq \aVTS}$ (\cf \Cref{def:vts-equivalence}): The transition relation is not changed and $\smash{\fVerdict[\abs{\sControlStates}]}$ yields more specific verdicts than $\fVerdict$ due to $\fVerdict[i + 1](\aControlState) \isMoreSpecific \fVerdict[i](\aControlState)$.
In particular, we obtain $\fExecVerdict'(\aWord) \isMoreSpecific \fExecVerdict(\aWord)$ for all $\aWord \in \fLang(\aVTS)$.
For the very same reasons, lookahead refinement preserves monotonicity.
Note that $\fExecVerdict'(\aWord)$ is a prediction concerning all possible futures.

Lookahead refinement iterates $\fVerdict[i]$ at most $\abs{\sControlStates}$ times towards a fixpoint,
joining at most $D$ verdicts in time $\jc(D)$ in each iteration, where $D$ is the maximal outdegree of $\aVTS$.
Hence, the worst-case time complexity is $\fBigO(\jc(D) \cdot \smash{\abs{\sControlStates}^2})$.

  \Cref{fig:refine-configs} shows an example of lookahead refinement of configuration verdicts, with an excerpt of the original VTS on the left and its refinement on the right.
  Assume that the system continues running, we know that its configuration must either be $\aConfig[1]$ or $\aConfig[2]$ since $\alpha$ or $\beta$ will inevitably be observed, yielding verdicts $\set{\aConfig[1]}$ or $\set{\aConfig[2]}$, respectively.
  This leads to refining
  $\set{\aConfig[1],\aConfig[2],\aConfig[3]}$ to $\set{\aConfig[1],\aConfig[2]}$.
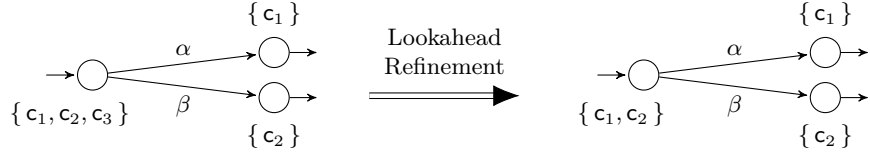
\begin{figure}[h]
  \centering
  \begin{tikzpicture}[automaton,node distance=2cm,font=\footnotesize]
    \node[state,initial left,circle,inner sep=0pt,minimum size=4mm] (initial) {};
    \node[state,right=of initial,inner sep=0pt,yshift=.3cm,minimum size=4mm] (s1) {};
    \node[state,right=of initial,inner sep=0pt,yshift=-.3cm, minimum size=4mm] (s2) {};
    \node[below=1pt of initial,xshift=-.3cm] {$\set{\aConfig[1],\aConfig[2],\aConfig[3]}$};
    \node[above=1pt of s1] {$\set{\aConfig[1]}$};
    \node[below=1pt of s2] {$\set{\aConfig[2]}$};
    \path (initial) edge node[above] {$\alpha$} (s1)
    (initial) edge node[below] {$\beta$} (s2)
    (s1) edge ([xshift=.6cm]s1)
    (s2) edge ([xshift=.6cm]s2);
  \end{tikzpicture}
  \hspace{.4cm}
  \begin{tikzpicture}
    \draw[double,double distance=.6mm,->,arrows = {-Latex[width=8pt, length=10pt]}] (0,0) -> node[above,align=center,yshift=5pt] {Lookahead \\ Refinement} (2,0);
    \draw (0,-.8);
  \end{tikzpicture}
  \hspace{.4cm}
  \begin{tikzpicture}[automaton,node distance=2cm,font=\footnotesize]
    \node[state,initial left,circle,inner sep=0pt,minimum size=4mm] (initial) {};
    \node[state,right=of initial,inner sep=0pt,yshift=.3cm,minimum size=4mm] (s1) {};
    \node[state,right=of initial,inner sep=0pt,yshift=-.3cm, minimum size=4mm] (s2) {};
    \node[below=1pt of initial,xshift=-.3cm] {$\set{\aConfig[1],\aConfig[2]}$};
    \node[above=1pt of s1] {$\set{\aConfig[1]}$};
    \node[below=1pt of s2] {$\set{\aConfig[2]}$};
    \path (initial) edge node[above] {$\alpha$} (s1)
    (initial) edge node[below] {$\beta$} (s2)
    (s1) edge ([xshift=.6cm]s1)
    (s2) edge ([xshift=.6cm]s2);
  \end{tikzpicture}
  \caption{An example of lookahead refinement of configuration verdicts.}
  \vspace{-2em}
  \label{fig:refine-configs}
\end{figure}
\subsection{Observational Imperfections}\label{sec:synth-trans}
In real-world scenarios, observations are rarely perfect.
Accommodating common and often unavoidable observational imperfections, we develop generic VTS transformations for dealing with partial observability, delays, and losses, all in a provably most-specific manner.
In the following, we define such transformations for given a VTS $\aVTS = \tuple{\sControlStates, \tsActions, \initialControlStates, \tsRel, \sVerdicts, \fVerdict}$.
For delays and losses, we handle the bounded and unbounded case.
Industrial network stacks often provide guarantees in the form of bounds that can be obtained by analyses (\eg \cite{felser2005real,DBLP:journals/tie/TovarV99,di2012understanding,DBLP:conf/rtss/TindellHW94}).

\paragraph{Observability Projection.}\label{sec:observability-projection}
Under the assumption that only actions $\tsObsAct \subseteq \tsActions$ are observable, \emph{observability} projection transforms a VTS $\aVTS$ into a VTS $\aVTS'$.
Towards the construction of $\aVTS'$, let $\obsClosure[i](\aControlState)$ be the set of states reachable from $\aControlState \in \sControlStates$ by taking at most $i \in \Nat$ unobservable transitions:
\vspace{-2pt}
\begin{equation}
	\label{eq:VTSobsiter}
  \smash{\obsClosure[0](\aControlState) = \set{\aControlState}
  \qquad
  \obsClosure[i + 1](\aControlState) = \obsClosure[i](\aControlState) \cup \tsPost\big(\obsClosure[i](\aControlState)\!,\, \tsActions {\setminus} \tsObsAct\big)}
  \vspace{-2pt}
\end{equation}
$\obsClosure[i]$ reaches a fixpoint after at most $\abs{\sControlStates}$ iterations.
We define
\vspace{-2pt}
\begin{equation}
	\label{eq:VTSobs}
  \smash{\aVTS' := \tuple{\sControlStates, \tsObsAct, \initialControlStates, \tsRel', \sVerdicts, \fVerdict'}
  \quad\text{with}\quad \fVerdict'(\aControlState) := \textstyle \bigsqcup_{\aControlState' \in \obsClosure[\abs{\mskip-2mu\sControlStates\mskip-2mu}](\aControlState)} \fVerdict(\aControlState')}\vspace{-2pt}
\end{equation}
where $\tuple{\aControlState, \anObservable, \aControlState''} \in \mathord{\transitionRel'}$ iff $\tuple{\aControlState', \anObservable, \aControlState''} \in \tsRel$ for some $\aControlState' \in \obsClosure[\abs{\sControlStates}](\aControlState)$.

\begin{restatable}{theorem}{thmObservabilityProjection}
  \label{thm:observability-projection}
  For
  $\tsObsAct \subseteq \tsActions$, we have (i)~for all $\aWord' \in \fLang(\aVTS')$:
  \vspace{-2pt}
  \begin{equation}
    \smash{\textstyle\fExecVerdict'(\aWord') = 
    \bigsqcup \filter{\fExecVerdict(\aWord)}{\text{for } \aWord \in \fLang(\aVTS) \text{ s.t. }\, \proj{\aWord}{\tsObsAct} = \aWord' \vphantom{\bigsqcup}}}
    \vspace{-2pt}
  \end{equation}
  and (ii)~$\fLang(\aVTS') = \proj{\fLang(\aVTS)}{\tsObsAct}$.
\end{restatable}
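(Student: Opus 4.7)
The plan is to prove both parts together by establishing a single reachability lemma that exactly characterizes, for every observable word $\aWord' \in \finiteSeqs{\tsObsAct}$, the state set
$\bigcup_{\aControlState \in \tsExec'(\aWord')} \obsClosure[\abs{\sControlStates}](\aControlState)$
as the collection of all states reachable in $\aVTS$ by words $\aWord \in \finiteSeqs{\tsActions}$ with $\proj{\aWord}{\tsObsAct} = \aWord'$. Once this correspondence is in place, part (i) follows from the definitions of $\fVerdict'$ and $\fExecVerdict$ by re-associating two nested joins, and part (ii) follows because $\tsExec'(\aWord')$ is non-empty exactly when its closure is non-empty, which by the lemma happens exactly when some $\aWord \in \fLang(\aVTS)$ projects to $\aWord'$.

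Before the main induction, I first need the standard fixpoint fact that $\aControlState' \in \obsClosure[\abs{\sControlStates}](\aControlState)$ iff $\aControlState'$ is reachable from $\aControlState$ via a (possibly empty) sequence of unobservable actions. This is an easy induction on the iteration index in \eqref{eq:VTSobsiter}, together with the observation that at most $\abs{\sControlStates}$ iterations are needed to stabilise since closures are monotonically nested subsets of a finite set. Then I would prove the main lemma by induction on $\abs{\aWord'}$. For the base case $\aWord' = \emptySeq$, we have $\tsExec'(\emptySeq) = \initialControlStates$, and words in $\fLang(\aVTS)$ projecting to $\emptySeq$ are exactly sequences of unobservable actions; the fixpoint fact makes both sides coincide with states reachable from $\initialControlStates$ via unobservables only. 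For the step $\aWord' = \aWord'' \cdot \anObservable$, I unfold the definition of $\transitionRel'$ in \eqref{eq:VTSobs}: a state lies in $\tsExec'(\aWord'')$ followed by an $\anObservable$-step in $\aVTS'$ precisely when it is reached in $\aVTS$ from some state in the closure of $\tsExec'(\aWord'')$ by an $\anObservable$-transition; applying the induction hypothesis on $\aWord''$ and then taking the outer closure $\obsClosure[\abs{\sControlStates}]$ yields exactly all states reachable from $\initialControlStates$ by a word of the shape (unobservables) $\anObservable_1$ (unobservables) $\cdots$ $\anObservable_n$ (unobservables), which is the desired RHS.

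With the lemma in hand, part (i) is a short calculation: substituting the definition $\fVerdict'(\aControlState) = \bigsqcup_{\aControlState' \in \obsClosure[\abs{\sControlStates}](\aControlState)} \fVerdict(\aControlState')$ into $\fExecVerdict'(\aWord') = \bigsqcup_{\aControlState \in \tsExec'(\aWord')} \fVerdict'(\aControlState)$ and flattening the nested joins, the lemma rewrites the indexing set to $\bigcup_{\aWord \colon \proj{\aWord}{\tsObsAct} = \aWord'} \tsExec(\aWord)$, whose join of $\fVerdict$-values is by \Cref{def:trace-belief-join} exactly $\bigsqcup \filter{\fExecVerdict(\aWord)}{\aWord \in \fLang(\aVTS), \proj{\aWord}{\tsObsAct} = \aWord'}$. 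For part (ii), $\aWord' \in \fLang(\aVTS')$ iff $\tsExec'(\aWord') \neq \emptySet$, which (since each state belongs to its own closure) is iff the left-hand side of the lemma is non-empty, iff there is some $\aWord \in \fLang(\aVTS)$ with $\proj{\aWord}{\tsObsAct} = \aWord'$.

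The main obstacle is the delicate split of unobservable behaviour between the transition relation $\transitionRel'$ (which absorbs unobservable prefixes before each observable symbol) and the verdict function $\fVerdict'$ (which absorbs the trailing unobservable suffix); getting the inductive statement right so that these two mechanisms, applied alternately, produce \emph{every} word projecting to $\aWord'$ exactly when needed — without over- or under-counting witnesses — is where care is required. The fact that we join rather than union means set-equality of witnesses, not multiplicity, is what matters, and this is what makes the lemma-based reduction clean once the right formulation has been identified.
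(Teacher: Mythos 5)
Your proposal is correct and follows essentially the same route as the paper's argument: an induction on the length of the observable word establishing that $\bigcup_{\aControlState \in \tsExec'(\aWord')} \obsClosure[\abs{\sControlStates}](\aControlState)$ coincides with the states $\aVTS$ reaches on words projecting to $\aWord'$ (using that the closure is exactly unobservable reachability), from which (i) follows by flattening the nested joins and (ii) by non-emptiness of these sets. The one point worth making explicit is that the outer join in (i) exists even though infinitely many $\aWord$ may project to $\aWord'$, because the set of verdict \emph{values} ranged over is finite (being joins over subsets of the finite image of $\fVerdict$).
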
 %
\noindent \Cref{thm:observability-projection} states that $\aVTS'$ yields most-specific verdicts subsuming the verdicts generated by traces with the same $\tsObsAct$-projection, \ie, traces that are indistinguishable when only observing actions in $\tsObsAct$.

For constructing $\aVTS'$, the fixpoint computation on $X_i$ (see \eqref{eq:VTSobsiter}) has to be carried out for each of the $\abs{\sControlStates}$ states before at most $\abs{\sControlStates}$ verdicts are joined~\eqref{eq:VTSobs}.
Hence, the worst-case time complexity is $\fBigO(\jc(\abs{\sControlStates}) \cdot \abs{\sControlStates} + \smash{\abs{\sControlStates}^2})$.

\mxDeclareMathSymbol{\aLostArrivedWord}{\omega}
\mxDeclareMathSymbol{\sLostArrivedWords}{\Omega}

\paragraph{Delays.}
Another realistic observational imperfection are delays.
Given a bound $B$, observations may be delayed by up to $B$ steps.
A VTS robust to a delay of up to $B$ steps is synthesized following a similar approach to \eqref{eq:VTSobsiter}:
\vspace{-2pt}
\begin{equation}
	\label{eq:bounded-loss-iter}
  \smash{\obsClosure[0](\aControlState) = \set{\aControlState}
  \qquad
  \obsClosure[i + 1](\aControlState) = \obsClosure[i](\aControlState) \cup \tsPost\big(\obsClosure[i](\aControlState)\!,\, \tsActions\big)}
  \vspace{-2pt}
\end{equation}
If observations may be delayed by up to $B$ steps, we must look up to $B$ observations ahead.
This is achieved with the following transformation:
\vspace{-2pt}
\begin{equation}
  \label{const:bounded-delays}
  \smash{\aVTS' := \tuple{\sControlStates, \tsActions, \initialControlStates, \tsRel, \sVerdicts, \fVerdict'}
  \quad\text{with}\quad \fVerdict'(\aControlState) := \textstyle \bigsqcup_{\aControlState' \in \obsClosure[B](\aControlState)} \fVerdict(\aControlState')}\vspace{-2pt}
\end{equation}
The transformed verdict function $\fVerdict'$ looks up to $B$ observations ahead.
\begin{theorem}
  \label{th:bounded-delays}
  We have (i) for all $\aWord' \in \fLang(\aVTS')$
  \begin{equation}
    \smash{\textstyle\fExecVerdict'(\aWord') = 
    \bigsqcup \filter{\fExecVerdict(\aWord)}{\text{for } \aWord \in \fLang(\aVTS) \text{ s.t. }\, \exists 0 \leq \Delta \leq B: (\aWord[i])_{i=1}^{\abs{\aWord} - \Delta} = \aWord' \vphantom{\bigsqcup}}}
    \vspace{-2pt}
  \end{equation}
  and (ii)~$\fLang(\aVTS') = \filter{(\aWord[i])_{i=1}^{\abs{\aWord} - \Delta}}{(\aWord[i])_{i=1}^{n} \in \fLang(\aVTS) \text{ and } 0 \leq \Delta \leq B}$.
\end{theorem}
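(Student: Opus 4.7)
The construction of $\aVTS'$ leaves the states, initial states, alphabet, and transition relation of $\aVTS$ intact, and only replaces $\fVerdict$ with $\fVerdict'$. I would therefore mirror the proof of \Cref{thm:observability-projection}, splitting the argument into the language statement (ii) and the verdict statement (i).

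For (ii), since $\aVTS'$ shares its underlying TS with $\aVTS$, we have $\fLang(\aVTS') = \fLang(\aVTS)$, and this language is prefix-closed: $\tsExec(\seqConcat{\aWord}{\tsAction}) = \tsPost(\tsExec(\aWord), \set{\tsAction}) \neq \emptySet$ forces $\tsExec(\aWord) \neq \emptySet$. Hence truncating any $\aWord \in \fLang(\aVTS)$ by up to $B$ symbols yields again a word in $\fLang(\aVTS)$ (giving the $\subseteq$ inclusion of (ii)), while the choice $\Delta = 0$ gives the $\supseteq$ inclusion. This establishes (ii).

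For (i), the main ingredient is the characterization
\begin{equation*}
\obsClosure[B](\aControlState) \ =\ \filter{\aControlState' \in \sControlStates}{\aControlState' \text{ is reachable from } \aControlState \text{ by at most } B \text{ transitions in the underlying TS}},
\end{equation*}
which I would prove by a routine induction on $B$ from \eqref{eq:bounded-loss-iter}: the base case $\obsClosure[0](\aControlState) = \set{\aControlState}$ is immediate, and the inductive step unfolds one more application of $\tsPost$ over all actions. Given this, and fixing $\aWord' \in \fLang(\aVTS')$, I would unfold
\begin{equation*}
\fExecVerdict'(\aWord') \ =\ \bigsqcup_{\aControlState \in \tsExec(\aWord')} \fVerdict'(\aControlState) \ =\ \bigsqcup_{\aControlState \in \tsExec(\aWord')} \bigsqcup_{\aControlState' \in \obsClosure[B](\aControlState)} \fVerdict(\aControlState'),
\end{equation*}
flatten the nested join by associativity and commutativity of $\sqcup$, and use the characterization to reindex over pairs $(\aWord'', \aControlState')$ with $\abs{\aWord''} \leq B$ and $\aControlState' \in \tsExec(\seqConcat{\aWord'}{\aWord''})$. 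Grouping first by the continuation $\aWord''$, the inner join collapses to $\fExecVerdict(\seqConcat{\aWord'}{\aWord''})$, and the substitutions $\aWord := \seqConcat{\aWord'}{\aWord''}$ and $\Delta := \abs{\aWord''}$ yield the right-hand side of (i).

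The only non-routine step is the closure characterization, and even that is a straightforward induction; all remaining work is bookkeeping around joins in a join-semilattice and an appeal to prefix-closure. I expect the proof to run essentially parallel to that of \Cref{thm:observability-projection}, with the unobservable-action restriction dropped and the saturation replaced by a $B$-capped iteration, so I do not foresee any serious obstacle.
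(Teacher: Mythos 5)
Your proof is correct and takes essentially the same route as the paper's: prove by induction that $\obsClosure[B](\aControlState)$ is exactly the set of states reachable from $\aControlState$ in at most $B$ transitions, then unfold $\fExecVerdict'(\aWord')$ into the nested join and reindex it over the extensions of $\aWord'$ by at most $B$ further symbols, while (ii) follows from the unchanged underlying transition system together with prefix-closure of $\fLang(\aVTS)$. The only blemish is that you label the two inclusions of (ii) the wrong way around (truncation-closure gives the $\supseteq$ direction of the stated equality and $\Delta = 0$ gives $\subseteq$), but both directions are in fact established, so this is cosmetic.
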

\noindent \Cref{th:bounded-delays} states that the verdict yielded for some trace $\aWord'$ by $\aVTS'$ is the most-specific verdict subsuming the verdicts yielded by $\aVTS$ for those traces of which $\aWord'$ may arise by a delay of up to $B$ steps.
\emph{Unbounded delays} can be handled by using the fixpoint $\obsClosure[\abs{\sControlStates}]$ instead of $\obsClosure[B]$.

\paragraph{Losses.}
It is well-known that in communication networks, packet loss often occurs in bursts.
This insight lead to the established Gilbert-Elliott channel model \cite{6768434,6769369}.
Using this model, one may obtain a bound $B$ on consecutive losses such that the probability for more than $B$ losses is below a certain threshold $p$.
We aim to synthesize a VTS that is robust to up to $B$ consecutive losses.
To formalize such \emph{bounded losses}, we follow a formalization established in the literature on \emph{weakly-hard real-time systems} for consecutive deadline misses~\cite{DBLP:journals/tc/BernatBL01,DBLP:journals/tcad/PazzagliaM22}.

Let $\aLostArrivedWord \in \finiteSeqs{\!\set{\mathsf{L}, \mathsf{A}}\!}$ be a finite sequence over the set $\set{\!\mathsf{L}, \mathsf{A}\!}$ where $\mathsf{L}$ indicates that an observation gets lost and $\mathsf{A}$ indicates that it arrives.
The word $\aLostArrivedWord = (\aLostArrivedWord[i])_{i=1}^{n}$ satisfies the constraint of at most $B \in \Nat$ consecutive losses iff:
\begin{eqinline}
  \forall 1 \leq i \leq j \leq n: \aLostArrivedWord[i] = \aLostArrivedWord[j] = \mathsf{L} \land j - i > B \implies \exists i \leq k \leq j: \aLostArrivedWord[k] = \mathsf{A}
\end{eqinline}
That is, there is at least one arrival between any two losses that are more than $B$ apart.
For $B \in \Nat$, we denote the set of such words by $\sLostArrivedWords[B]$.
Given a trace $\aWord \in \fLang(\aVTS)$ of a VTS and a word $\aLostArrivedWord \in \sLostArrivedWords[B]$ of equal length, \ie, $\abs{\aWord} = \abs{\aLostArrivedWord}$, the \emph{$\aLostArrivedWord$-projection} of $\aWord$, denoted by $\proj{\aWord}{\aLostArrivedWord}$, removes all lost observations:
\begin{eqinline}
  \proj{\emptySeq}{\emptySeq} := \emptySeq
  \qquad
  \proj{\left(\seqConcat{\aWord}{a}\right)}{\left(\seqConcat{\aLostArrivedWord}{x}\right)} := \begin{cases}
    \proj{\aWord}{\aLostArrivedWord} & \text{ iff } x = \mathsf{L} \\
    \seqConcat{\left( \proj{\aWord}{\aLostArrivedWord} \right)}{a} & \text{ otherwise}
  \end{cases}
\end{eqinline}
A VTS robust to at most $B$ consecutive losses is synthesized in a similar way as for bounded delays.
In addition to looking $B$ observations ahead, which potentially have been lost, we also need to adapt the transition relation, as those observations may never arrive.
Thus, using $X$ as defined in \eqref{eq:bounded-loss-iter}, we define
\vspace{-2pt}
\begin{equation}
	\label{eq:bounded-loss-vts}
  \smash{\aVTS' := \tuple{\sControlStates, \tsActions, \initialControlStates, \tsRel', \sVerdicts, \fVerdict'}
  \quad\text{with}\quad \fVerdict'(\aControlState) := \textstyle \bigsqcup_{\aControlState' \in \obsClosure[B](\aControlState)} \fVerdict(\aControlState')}\vspace{-2pt}
\end{equation}
and where $\tuple{\aControlState, \anObservable, \aControlState''} \in \mathord{\transitionRel'}$ iff $\tuple{\aControlState', \anObservable, \aControlState''} \in \tsRel$ for some $\aControlState' \in \obsClosure[B](\aControlState)$.
\begin{restatable}{theorem}{thmBoundedLosses}
  \label{th:bounded-losses}
  We have (i) for all $\aWord' \in \fLang(\aVTS')$
  \vspace{-2pt}
  \begin{equation}
  \label{eq:bounded-loss-spec}
  \smash{\textstyle\fExecVerdict'(\aWord') = 
  \bigsqcup \filter{\fExecVerdict(\aWord)}{\text{for } \aWord \in \fLang(\aVTS) \text{ s.t. }\, \exists \aLostArrivedWord \in \sLostArrivedWords[B], \abs{\aLostArrivedWord} = \abs{\aWord}: \proj{\aWord}{\aLostArrivedWord} = \aWord' \vphantom{\bigsqcup}}}
  \vspace{-2pt}
\end{equation}
  and (ii)~$\fLang(\aVTS') = \filter{\proj{\aWord}{\aLostArrivedWord}}{\aWord \in \fLang(\aVTS) \text{ and } \aLostArrivedWord \in \sLostArrivedWords[B] \text{ s.t. } \abs{\aLostArrivedWord} = \abs{\aWord}}$.
\end{restatable}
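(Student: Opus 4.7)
The plan is to establish an explicit correspondence between executions of $\aVTS'$, augmented with an $\obsClosure[B]$-witness for each transition, and executions of $\aVTS$ equipped with a valid loss pattern from $\sLostArrivedWords[B]$. Both claims (i) and (ii) then follow by re-indexing unions and joins along this correspondence. Concretely, every transition $\tuple{\aControlState[i], \tsAction[i+1], \aControlState[i+1]}$ of $\tsRel'$ is witnessed, by definition, by some $\aControlState'[i] \in \obsClosure[B](\aControlState[i])$ such that $\tuple{\aControlState'[i], \tsAction[i+1], \aControlState[i+1]} \in \tsRel$, and $\aControlState'[i]$ is reachable from $\aControlState[i]$ within at most $B$ transitions of $\aVTS$ by definition of $\obsClosure[B]$. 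Splicing these hidden subpaths between the observed transitions produces an execution of $\aVTS$, and labelling skipped transitions by $\mathsf{L}$ and observed ones by $\mathsf{A}$ produces a word in $\sLostArrivedWords[B]$, because each maximal $\mathsf{L}$-run thus obtained has length at most $B$.

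I would first prove part~(ii). For the $\supseteq$ direction, given $\aWord \in \fLang(\aVTS)$ and $\aLostArrivedWord \in \sLostArrivedWords[B]$ with $\abs{\aLostArrivedWord} = \abs{\aWord}$, decompose $\aWord$ along $\aLostArrivedWord$ into alternating maximal loss-runs and single arrival actions. The constraint defining $\sLostArrivedWords[B]$ bounds every inner loss-run, so each fits into a single $\obsClosure[B]$ step of the $\aVTS'$-transition fired on the immediately following arrival, giving rise to an execution of $\aVTS'$ whose trace is $\proj{\aWord}{\aLostArrivedWord}$. The $\subseteq$ direction is exactly the correspondence above, applied to an arbitrary execution of $\aVTS'$ and its trace.

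For part~(i), I would unfold the verdict of $\aVTS'$ along its definition:
\begin{equation*}
  \fExecVerdict'(\aWord') \;=\; \bigsqcup_{\aControlState \in \tsExec'(\aWord')} \fVerdict'(\aControlState) \;=\; \bigsqcup \; \filter{\fVerdict(\aControlState')}{\aControlState \in \tsExec'(\aWord'),\; \aControlState' \in \obsClosure[B](\aControlState)}.
\end{equation*}
The inner $\obsClosure[B]$ captures precisely the \emph{trailing} loss segment permitted after the last observed action. Extending the correspondence from~(ii) by one additional trailing loss segment of length at most $B$, the states $\aControlState'$ above range exactly over the final states of executions $\anExecution \in \fExecs(\aVTS)$ whose traces $\aWord$ project to $\aWord'$ under some $\aLostArrivedWord \in \sLostArrivedWords[B]$ of length $\abs{\aWord}$. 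Re-indexing the join along this correspondence yields the right-hand side of~\eqref{eq:bounded-loss-spec}.

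The principal obstacle will be the bookkeeping for \emph{trailing} losses, which the construction absorbs through $\fVerdict'$ rather than through $\tsRel'$. One must verify that combining up to $B$ inner losses per $\tsRel'$-step with up to $B$ trailing losses on the final state still produces a single pattern valid in $\sLostArrivedWords[B]$; this reduces to the fact that the defining constraint of $\sLostArrivedWords[B]$ is a purely local one on consecutive losses and is preserved under concatenation of locally-valid segments. Leading-loss boundary cases are handled transparently since $\obsClosure[0](\aControlState) = \set{\aControlState}$ and $\tsInitialStates$ is unchanged by the construction.
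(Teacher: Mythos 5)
Your plan is correct and follows essentially the same route as the paper's proof: a correspondence between executions of $\aVTS'$ (each $\tsRel'$-step expanded via its $\obsClosure[B]$-witness into a spliced-in path of at most $B$ lost transitions followed by the observed one) and executions of $\aVTS$ paired with loss patterns in $\sLostArrivedWords[B]$, with inner loss-runs absorbed by $\tsRel'$, the trailing loss-run absorbed by $\fVerdict'$, and both claims obtained by re-indexing the unions and joins along this correspondence. The one detail to pin down is the bound extracted from the definition of $\sLostArrivedWords[B]$ (as literally written, the condition $j - i > B$ tolerates maximal $\mathsf{L}$-runs of length $B{+}1$, which would not fit into a single $\obsClosure[B]$ step), but that is an off-by-one in the definition itself rather than a flaw in your argument, which is sound under the intended reading of at most $B$ consecutive losses.
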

\noindent \Cref{th:bounded-losses} states that the verdict yielded for some trace $\aWord'$ by $\aVTS'$ is the most-specific verdict subsuming the verdicts yielded by $\aVTS$ for traces of which $\aWord'$ may arise by up to $B$ consecutive losses, with $B = \abs{\sControlStates}$ handling unbounded losses.

\paragraph{Possiblity Lifting.}
When faced with observational imperfections, certain verdicts become indistinguishable.
We deal with those indistinguishable verdicts by subsuming them into a most-specific verdict (\cf join in \eqref{eq:VTSobs}, \eqref{eq:bounded-loss-vts}, and \eqref{const:bounded-delays}).
It can be beneficial to keep verdicts as individual \emph{possibilities} instead, \eg, when synthesizing diagnosers.
\emph{Possibility lifting} of a VTS $\aVTS$ achieves this by 
replacing its verdict domain $\tuple{\sVerdicts, \sqsubseteq}$ by $\tuple{\fPowerSet(\sVerdicts), \subseteq}$ 
and $\fVerdict$ by $\fVerdict'(\aControlState) := \set{\fVerdict(\aControlState)}$.
After such a lifting, all presented techniques can be applied and retain individual verdicts.

\subsection{Towards Efficient Implementations}
\label{sec:synth-finalization}
To efficiently implement a VTS, whether in software or hardware, it is desirably deterministic and minimal in size.
This is particularly crucial for environments with space limitations, like embedded devices or FPGAs \cite{DBLP:conf/dsd/BodenKBR04,DBLP:conf/fpl/ZhangZLZLB22}.
Determinization and minimization results have been developed for lattice automata~\cite{DBLP:conf/vmcai/KupfermanL07,DBLP:conf/fossacs/HalamishK11}.
For VTSs, they also follow from elementary automata-theoretic results.

\paragraph{Determinization.}
Any VTS can be determinized by adapting the usual power set construction for finite automata \cite{DBLP:journals/ibmrd/RabinS59}.
Let $\aVTS = \tuple{\sControlStates, \tsActions, \initialControlStates, \tsRel, \sVerdicts, \fVerdict}$ be a VTS.
We define a VTS $\fDm(\aVTS) := \tuple{\fPowerSet(\sControlStates), \tsActions, \set{\initialControlStates}, \transitionRel', \sVerdicts, \fVerdict'}$
where\vspace{-2pt}
\begin{equation}
  \smash{\fVerdict'(\aControlStateSet) := \textstyle \bigsqcup_{\aControlState \in \aControlStateSet} \fVerdict(\aControlState)}\vspace{-2pt}
  \label{eq:vts-det-states}
\end{equation}
and $\tuple{\aControlStateSet, \anObservable, \aControlStateSet'} \in \mathord{\transitionRel'}$ iff $\aControlStateSet' = \tsPost(\aControlStateSet, \set{\anObservable})$.
\begin{restatable}{theorem}{thmDeterminization}
  \label{th:determinization}
  For each VTS $\aVTS$, $\aVTS$ and $\fDm(\aVTS)$ are verdict-equivalent.
\end{restatable}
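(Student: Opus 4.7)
The plan is to argue that the standard power-set construction for non-deterministic finite automata lifts cleanly to VTSs, with the verdict bookkeeping handled by the definition of $\fVerdict'$ in \eqref{eq:vts-det-states}. Let $\aVTS = \tuple{\sControlStates, \tsActions, \initialControlStates, \tsRel, \sVerdicts, \fVerdict}$ and write $\tsExec$ for the reachability map of $\aVTS$ and $\tsExec'$ for that of $\fDm(\aVTS)$. The key auxiliary claim is that $\fDm(\aVTS)$ is deterministic and that for every word $\aWord \in \finiteSeqs{\tsActions}$ we have $\tsExec'(\aWord) = \set{\tsExec(\aWord)}$, viewed as a singleton subset of $\fPowerSet(\sControlStates)$.

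I would prove this auxiliary claim by induction on $\abs{\aWord}$. The base case $\aWord = \seqEmpty$ is immediate from the definition of $\fDm(\aVTS)$, whose unique initial state is $\initialControlStates = \tsExec(\seqEmpty)$. For the inductive step $\aWord = \seqConcat{\aWord[0]}{\tsAction}$, the induction hypothesis gives $\tsExec'(\aWord[0]) = \set{\tsExec(\aWord[0])}$; by the transition relation of $\fDm(\aVTS)$ the only $\tsAction$-successor of $\tsExec(\aWord[0])$ is $\tsPost(\tsExec(\aWord[0]), \set{\tsAction})$, which by definition of $\tsExec$ in the original VTS equals $\tsExec(\seqConcat{\aWord[0]}{\tsAction})$. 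In particular this shows that $\fDm(\aVTS)$ is deterministic and that $\fLang(\fDm(\aVTS)) = \fLang(\aVTS)$: $\aWord \in \fLang(\aVTS)$ iff $\tsExec(\aWord) \neq \emptySet$ iff the unique reachable state $\tsExec(\aWord)$ of $\fDm(\aVTS)$ is non-empty, i.e., $\aWord \in \fLang(\fDm(\aVTS))$.

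With language equality in hand, verdict equality reduces to unfolding definitions. For $\aWord \in \fLang(\aVTS)$, \Cref{def:trace-belief-join} gives $\fExecVerdict(\aWord) = \bigsqcup \filter{\fVerdict(\aControlState)}{\aControlState \in \tsExec(\aWord)}$. Applying \Cref{def:trace-belief-join} to the deterministic VTS $\fDm(\aVTS)$ collapses the outer join over $\tsExec'(\aWord)$ to a single term, and the definition of $\fVerdict'$ in \eqref{eq:vts-det-states} reads $\fVerdict'(\tsExec(\aWord)) = \bigsqcup_{\aControlState \in \tsExec(\aWord)} \fVerdict(\aControlState)$, which is exactly $\fExecVerdict(\aWord)$. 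Hence $\fExecVerdict'(\aWord) = \fExecVerdict(\aWord)$ for all traces, and each VTS refines the other per \Cref{def:bts-refinement}, yielding verdict-equivalence.

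There is no real obstacle: the work is conceptually identical to the textbook NFA-to-DFA simulation, and the only extra ingredient is that the verdict of each determinized state was defined precisely as the join that \eqref{eq:def:trace-verdict-join} takes over the non-deterministic reachable set. The one place where care is warranted is making sure the singleton-in-$\fPowerSet(\sControlStates)$ versus subset-of-$\sControlStates$ distinction is kept straight between $\tsExec'$ and $\tsExec$, so that the verdict join in \eqref{eq:vts-det-states} is recognized as the same join as in \eqref{eq:def:trace-verdict-join}.
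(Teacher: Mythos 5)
Your overall strategy is the intended one: the inductive claim $\tsExec'(\aWord) = \set{\tsExec(\aWord)}$ reduces everything to unfolding \eqref{eq:vts-det-states} against \eqref{eq:def:trace-verdict-join}, and your verdict computation for words in $\fLang(\aVTS)$ is exactly right. There is, however, one step that fails as written: the conclusion $\fLang(\fDm(\aVTS)) = \fLang(\aVTS)$. Your induction shows that \emph{every} word $\aWord \in \finiteSeqs{\tsActions}$ reaches exactly one state of $\fDm(\aVTS)$, namely the subset $\tsExec(\aWord) \in \fPowerSet(\sControlStates)$ --- including the case $\tsExec(\aWord) = \emptySet$, since $\emptySet$ is itself a state of $\fDm(\aVTS)$ and the transition relation $\transitionRel'$ as defined sends every state to \emph{some} successor under every action. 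Acceptance in a TS is defined as non-emptiness of the set of reachable states, and $\tsExec'(\aWord) = \set{\tsExec(\aWord)}$ is a non-empty singleton even when $\tsExec(\aWord) = \emptySet$. So under the literal definitions, $\fDm(\aVTS)$ is action-enabled and accepts all of $\finiteSeqs{\tsActions}$; your ``i.e.''\ silently swaps ``the reachable state is a non-empty subset of $\sControlStates$'' for ``the set of reachable states is non-empty,'' which is precisely the singleton-versus-subset confusion you warned yourself about.

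The repair is to exclude the dead state: restrict the state space to $\fPowerSet(\sControlStates) \setminus \set{\emptySet}$, equivalently drop every transition whose target $\tsPost(\aControlStateSet, \set{\anObservable})$ is empty. This pruning is forced anyway, because $\fVerdict'(\emptySet)$ would be the join over the empty set, which need not exist in a join-semilattice without a bottom element (the configuration verdict domain $\tuple{\fPowerSet(\ValidFeat) \setminus \set{\emptySet}, \subseteq}$ has none). With that convention, a word is accepted by $\fDm(\aVTS)$ iff $\tsExec(\aWord) \neq \emptySet$, your language equality holds, and the remainder of your argument --- determinism, and the collapse of the outer join in \eqref{eq:def:trace-verdict-join} to the single term $\fVerdict'(\tsExec(\aWord)) = \bigsqcup_{\aControlState \in \tsExec(\aWord)} \fVerdict(\aControlState) = \fExecVerdict(\aWord)$ --- goes through verbatim and establishes mutual refinement per \Cref{def:bts-refinement}.
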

\noindent For each of the $\smash{2^{\abs{\sControlStates}}}$ states of $\smash{\fDm(\aVTS)}$ the join over at most $\abs{\sControlStates}$ verdicts must be computed as per \eqref{eq:vts-det-states}.
Hence, the worst-case time complexity is $\smash{\bigO(\smash{2^{\abs{\sControlStates}}}\cdot \jc(\abs{\sControlStates}))}$.

\paragraph{Minimization.}
For any finite automaton there is a unique minimal deterministic one accepting the same language~\cite{b5eab576-57e1-36aa-bfa3-62b318183436,myhill1957finite}.
This result carries over to VTSs.
\begin{definition}
  A deterministic VTS $\aVTS$ is \emph{minimal} iff all deterministic VTSs that are verdict-equivalent to $\aVTS$ have at least as many states as $\aVTS$.
\end{definition}
\begin{theorem}
  \label{th:vts-minimal}
  For each VTS there is a unique minimal deterministic VTS.
\end{theorem}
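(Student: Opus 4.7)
The plan is to mimic the classical Myhill--Nerode argument, suitably generalized so that it respects verdicts drawn from the join-semilattice $\tuple{\sVerdicts,\isMoreSpecific}$ rather than a Boolean accept/reject flag. First I would reduce to the deterministic case by applying $\fDm$ from the previous paragraph: by \Cref{th:determinization}, $\fDm(\aVTS)$ is verdict-equivalent to $\aVTS$, so any minimal deterministic VTS verdict-equivalent to $\aVTS$ is also minimal among deterministic VTSs verdict-equivalent to $\fDm(\aVTS)$ and vice versa.

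Next I would introduce a Nerode-style relation on the language. For any deterministic VTS $\aVTS'$ verdict-equivalent to $\aVTS$ and any trace $\aWord \in \fLang(\aVTS)$, associate the \emph{residual map} $r_{\aWord}\colon \finiteSeqs{\tsActions} \to \sVerdicts \cup \set{\bot}$ defined by $r_{\aWord}(\tau) := \fExecVerdict(\seqConcat{\aWord}{\tau})$ whenever $\seqConcat{\aWord}{\tau} \in \fLang(\aVTS)$ and $r_{\aWord}(\tau) := \bot$ otherwise, with $\bot$ a fresh symbol. Because $\aVTS'$ is deterministic and verdict-equivalent to $\aVTS$, traces leading to the same state in $\aVTS'$ must induce the same residual map; hence $\aVTS'$ has at least as many reachable states as there are distinct residual maps among $\set{r_{\aWord} \mid \aWord \in \fLang(\aVTS)}$.

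I would then construct an explicit canonical VTS $\aVTS[][*]$ whose states are exactly the equivalence classes $[\aWord]$ of the relation $\aWord \sim \aWord' \iff r_{\aWord} = r_{\aWord'}$, with $[\emptySeq]$ as initial state, transitions $\tuple{[\aWord], \tsAction, [\seqConcat{\aWord}{\tsAction}]}$ whenever $\seqConcat{\aWord}{\tsAction} \in \fLang(\aVTS)$, and $\fVerdict([\aWord]) := \fExecVerdict(\aWord)$. Well-definedness of transitions and the verdict function follows directly from the definition of $\sim$. By construction $\aVTS[][*]$ is deterministic, $\fLang(\aVTS[][*]) = \fLang(\aVTS)$, and $\fExecVerdict[*]([\aWord]) = \fExecVerdict(\aWord)$, so $\aVTS[][*]$ is verdict-equivalent to $\aVTS$. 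Combined with the lower bound from the preceding paragraph (and the observation that unreachable states can be discarded without changing verdict-equivalence), this proves $\aVTS[][*]$ is minimal.

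For uniqueness, I would argue that any minimal deterministic VTS verdict-equivalent to $\aVTS$ is isomorphic to $\aVTS[][*]$: every reachable state must correspond to a distinct residual map (otherwise one could merge two states and reduce the size, contradicting minimality), and the map sending a reachable state to its residual map is a VTS isomorphism onto $\aVTS[][*]$, respecting initial state, transitions and verdicts. The main obstacle I anticipate is bookkeeping around partiality: $\aVTS$ is not assumed action-enabled, so both $\fLang(\aVTS)$ and the residual maps are genuinely partial, and one must be careful that the equivalence relation distinguishes states solely by their verdict-bearing future (not by irrelevant non-determinism already collapsed by $\fDm$). Once partiality is handled uniformly via the $\bot$-extension above, the argument reduces to a straightforward adaptation of the classical DFA minimization proof. \qed
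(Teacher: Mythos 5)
Your proposal is correct and follows essentially the same route as the paper: determinize first, handle partiality by a $\bot$/trap-state completion, and then run the Myhill--Nerode argument with the initial distinction refined by verdicts. The only (cosmetic) difference is presentational --- you phrase the congruence on words via residual maps, whereas the paper phrases it as the coarsest stable partition of states separating distinct verdicts or distinct Nerode classes; these are the two standard equivalent formulations.
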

In contrast to deterministic finite automata, VTSs may not be action-enabled, \ie, their transition relation may be a partial function.
To see why the results for finite automata carry over to VTSs, assume all missing transitions to end in a non-accepting trap state, while all other states are accepting.
In the minimal VTS, states then correspond to the classes of the coarsest partition where states with distinct verdicts or distinct Myhill-Nerode equivalence classes~\cite{b5eab576-57e1-36aa-bfa3-62b318183436,myhill1957finite} are separated.
These classes guarantee verdict-equivalence as states with different verdicts or a different language belong to different classes.

Typical minimization algorithms for finite automata use partition refinement.
Building upon Hopcroft's earlier work \cite{HOPCROFT1971189}, Valmari and Lehtinen present an algorithm starting with a partition into accepting and non-accepting states \cite{valmari_et_al:LIPIcs:2008:1328}.
Their algorithm is
adapted for VTS minimization by initially partitioning states according to their verdicts.
The time complexity of $\bigO(\abs{\tsRel} \cdot \log \abs{\sControlStates})$ remains unchanged also for VTSs, as
all states may form their own class.

Minimization preserves the language of a VTS $\aVTS$ in accordance with \Cref{def:vts-equivalence}.
If only verdicts matter but not the language, an alternative approach may construct a VTS $\aVTS'$ with $\fExecVerdict'(\aWord) = \fExecVerdict(\aWord)$ for all $\aWord \in \fLang(\aVTS)$ and where $\aVTS'$ is admitted to accept additional words, \ie, $\fLang(\aVTS') \supseteq \fLang(\aVTS)$.
Such a VTS $\aVTS'$ can be even smaller than the minimized $\aVTS$.
Finding the smallest such $\aVTS'$ is a challenge on its own, and beyond the scope of this paper.
As a first step, we simply adapt the minimization algorithm to only split an equivalence class if there are actual transitions hitting inside and outside of a splitter (\cf \cite{HOPCROFT1971189,valmari_et_al:LIPIcs:2008:1328}).
Note that the resulting 
minimization algorithm, referred to as \emph{language-relaxing} in the sequel, is non-deterministic since it depends on the order in which splitters are considered.

\subsection{Putting Everything Together}
\label{sec:putting-everything-together}
We now have all the building blocks of the generic synthesis pipeline for VTS synthesis~(see \Cref{fig:pipeline}).
In the first step, a VTS is constructed based on a system model, either by annotation tracking or by some other means.
Annotation tracking takes a system model annotated with verdicts, \eg, an FTS, and produces a VTS tracking these annotations.
Optionally, lookahead refinement can be applied to enable predictions.
Then, to account for partial observability, delays, or losses, the presented transformations can be applied to obtain a VTS robust to imperfect observations.
Note that these transformations can be cascaded to obtain a VTS that accounts for multiple imperfections in a most-specific manner.
Lastly, the VTS is determinized and minimized to an efficient implementation.

\begin{wrapfigure}[9]{r}{2.2cm}
  \vspace{-2em}
  \centering
  \begin{tikzpicture}[lattice,node distance=.4cm]
      \node (normal) {$\emptySet$};
      \node (pump)[below left=of normal,xshift=.5cm] {$\set{\faultPumpShort}$};
      \node (electric)[below right=of normal,xshift=-.5cm] {$\set{\faultElectricShort}$};
      \node (all)[below=1.2cm of normal] {$\set{\faultPumpShort, \faultElectricShort}$};

    
    
    
      \path (normal) edge[gray] (pump)
            (normal) edge[gray] (electric)
            (pump) edge[gray] (all)
            (electric) edge[gray] (all);


\end{tikzpicture}  
\vspace{-.5em}
  \vspace{-1.2em}
  \caption{\hfill Fault class lattice.}
  \label{fig:fault-lattice}
\end{wrapfigure}
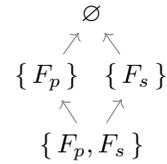
\paragraph{Diagnoser Synthesis.}
We can istantiate the pipeline to synthesize VTSs equivalent to the traditional construction by Sampath \etal \cite{DBLP:journals/tac/SampathSLST95}.
To this end, a TS with fault actions, as described in \Cref{sec:unification}, is annotated with verdicts from $\tuple{\fPowerSet(\faultClasses), \supseteq}$ (\eg, \Cref{fig:fault-lattice}).
Transitions with fault actions $\aFault$ are annotated with singleton sets $\set{\faultClass}$ of the respective fault class $\faultClass \ni \aFault$.
All other transitions (and states) are annotated with
the empty set
$\emptySet$.
Thus, for each execution $\anExecution$, 
the meet in \eqref{eq:ground-verdict} gives us a set of fault classes, corresponding to the faults that occurred on $\anExecution$ since $\sqcap := \cup$.
Using this annotated TS, we obtain a diagnoser by annotation tracking, followed by possibility lifting, observability projection onto $\tsObsAct$, determinization, and then minimization.
Here, possibility lifting changes the verdict domain from $\tuple{\fPowerSet(\faultClasses), \supseteq}$ to the usual diagnosis domain $\tuple{\fPowerSet(\fPowerSet(\faultClasses)), \subseteq}$.
The correctness of the construction follows by combining the theorems of the individual algorithms.
By incorporating lookahead refinement before possibility lifting,
we obtain \emph{predictive diagnosers}.
A predictive diagnoser indicates inevitable faults no later than the traditional techniques.

Moving further beyond the traditional construction, we can also annotate transitions with boolean expressions over some set $\basicEvents$ of \emph{basic fault events} (independent of the actions).
For instance, $e_1 \lor e_2$ or $e_1 \land \lnot e_2$ with $e_1, e_2 \in \basicEvents$.
The usual semantics of boolean expressions induce a lattice $\smash{\tuple{\fPowerSet(\fPowerSet(\basicEvents)), \subseteq}}$ where each expression $\boolExpr$ corresponds to a set $\boolSema(\boolExpr) \subseteq \fPowerSet(\basicEvents)$ of its satisfying assignments.
Naturally, a boolean expression $\boolExpr[1]$ is more specific than another $\boolExpr[2]$ iff $\boolExpr[1]$ implies $\boolExpr[2]$, which is the case iff all assignments satisfying $\boolExpr[1]$ also satisfy $\boolExpr[2]$, \ie, iff $\boolSema(\boolExpr[1]) \subseteq \boolSema(\boolExpr[2])$.
By annotating a TS with boolean expressions indicating whether or not they are enabled in the presence of certain combinations of basic fault events, we obtain an annotated TS over the verdict domain $\smash{\tuple{\fPowerSet(\fPowerSet(\basicEvents)), \subseteq}}$.
Notably, fault trees may serve as a basis for such annotations as they are commonly used to model how top-level faults are caused by lower-level faults, and have a natural interpretation as boolean expressions over a set of basic fault events~\cite{DBLP:journals/csr/RuijtersS15}.
By applying the instance of the VTS synthesis pipeline with possibility lifting as described above to such a TS, we obtain a diagnoser over the verdict domain $\smash{\tuple{\fPowerSet(\fPowerSet(\fPowerSet(\basicEvents))), \subseteq}}$. Here, each verdict can be interpreted as a set of sets of possible worlds in terms of modal logic~\cite{DBLP:books/cu/Chellas80}.
For instance, given a verdict  $\aVerdict \in \fPowerSet(\fPowerSet(\fPowerSet(\basicEvents)))$, it is \emph{necessary} that fault $e_1$ or fault $e_2$ occurred iff $W \subseteq \boolSema(e_1 \lor e_2)$ for all $W \in \aVerdict$, and it is \emph{possible} that fault $e_1$ occurred and fault $e_2$ did not occur iff $W \subseteq \boolSema(e_1 \land \lnot e_2)$ for some $W \in \aVerdict$.
Hence, the thereby constructed diagnoser goes well beyond what is traditionally possible and allows answering powerful modal logic queries.

\paragraph{Related Work.} 
For our theoretical foundation, we generalized lattice automata as introduced by Kupferman and Lustig~\cite{DBLP:conf/vmcai/KupfermanL07}, exploiting their simplification result for annotation tracking.
Zhang, Leucker, and Dong~\cite{DBLP:conf/nfm/ZhangLD12} use finite \emph{predictive words} to obtain possible continuations towards predictive verdicts regarding LTL properties.
Such words can be obtained via static analysis of a monitored program.
The work by Pinisetty \etal~\cite{DBLP:journals/jss/PinisettyJTFMP17} and  Ferrando \etal~\cite{DBLP:journals/fmsd/FerrandoCFLPFM21} incorporates assumptions about a system in terms of properties the system fulfills.
Cimatti, Tian, and Tonetta~\cite{DBLP:conf/rv/CimattiTT19} leverage \emph{fair Kripke structures}~\cite{DBLP:conf/icalp/KestenPR98}, a kind of transition system model, to incorporate assumptions about a system into LTL runtime monitoring.
Their approach can deal with partial observability and produce predictions.
In contrast to the generic techniques we developed, the aforementioned approaches are specifically tailored towards truth verdicts for properties.
Due to the generality of the VTS framework, our approach is also suitable to apply on future automata-based monitoring techniques.

\newcommand{\benchSvm}{\textsc{Svm}\xspace}
\newcommand{\benchMinepump}{\textsc{Minepump}\xspace}
\newcommand{\benchCpterminal}{\textsc{Cpterminal}\xspace}
\newcommand{\benchAerouc}{\textsc{Aerouc5}\xspace}
\newcommand{\benchClaroline}{\textsc{Claroline}\xspace}

\section{Evaluation: Configuration Monitors}
\label{sec:evaluation}
To demonstrate the efficacy of the developed synthesis techniques, we consider configuration monitors synthesized from established FTS benchmarks of the configurable systems community:
\benchSvm and  \benchMinepump~\cite{Classen2010b}, and \benchAerouc, \benchCpterminal, and \benchClaroline~\cite{Dev17}.
\Cref{tb:benchmarks-size} shows an overview of these benchmarks and their important characteristics.
We developed a partially symbolic implementation of the synthesis pipeline instantiated for configuration monitors where we use BDDs~\cite{DBLP:journals/tc/Bryant86,cudd} to succinctly represent and operate on sets of configurations.
There, we first apply annotation tracking to the FTSs, followed by observability projection, determinization, and minimization.
This leads to deterministic, minimal, sound, and complete configuration monitors.
The experiments have been conducted on a 16~core \emph{AMD Ryzen 9 5950X} CPU with 128\,GiB of RAM running Ubuntu~22.04.
An artifact with the implementation and everything necessary for reproducing the experiments is available online \cite{artifact}.

In our evaluation, we aim to answer the following research questions concerning our novel contribution of configuration monitors:
\begin{description}
  \item[RQ1] How do monitor sizes scale with the number of configurations?
  \item[RQ2] What are the potential space savings of minimization?
  \item[RQ3] How does partial observability impact the specificity of verdicts?
\end{description}
\vspace{-.5em}

\begin{table}[t]
  \centering
  \bgroup
  \footnotesize
  \def\arraystretch{1.2}
  \setlength\tabcolsep{3pt}
  \begin{tabular}{r||c|c|c|c|c}
    & \benchSvm &  \benchMinepump & \benchAerouc & \benchCpterminal & \benchClaroline \\ \hline\hline
    $\abs{\ValidFeat}$ & 24 & 32 & 256 & 4\,774 & 820\,193\,280 \\ \hline
    $\abs{\tsActions}$ &  12 & 23 & 11 & 15 & 106 \\ \hline
    FTS & 9/13  & 25/41 & 25/46 & 11/17 & 106/11\,236  \\ \hline\hline
    monitor & 87/120  & 560/992 & 94/178 & 102/161 & 5\,431\,296/575\,717\,376 \\ \hline
    minimized & 87/120  & 496/928 & 56/156 & 93/152 & 65\,536/6\,946\,816\\ 
    relaxed & 17/26  & 53/337 & 4/4& 11/26 & 65\,536/1\,515\,520
  \end{tabular}
  \egroup
  \vspace{6pt}
  \caption{For each model, the rows show (1) the number of valid configurations, (2) the number of actions, (3) the size of the FTS (states/transitions), (4) the size of the monitor constructed from the FTS, and the size of the monitor after (5) language-preserving and (6) language-relaxing minimization.
  }
  \vspace{-2em}
  \label{tb:benchmarks-size}
\end{table}

\paragraph{\textbf{RQ1}: How do monitor sizes scale with the number of configurations?}
Except for \benchClaroline, the size of the FTSs and the number of configurations is comparably small.
Note that a configuration monitor may need to distinguish all possible configurations, potentially leading to an exponential blowup.
\Cref{tb:benchmarks-size} shows the size of the configuration monitors prior to minimization (4) and after minimization (5,6).
We observe across all benchmarks that configuration monitors can be significantly smaller than the number of configurations would suggest.
\benchClaroline shows the greatest divergence with roughly $8 \cdot 10^8$ configurations while the monitor has about $5 \cdot 10^6$ states.
A similar but not as extreme observation can be made about \benchAerouc and \benchCpterminal.
Monitor sizes also influence the construction timings. The \benchClaroline monitor synthesis took around seven minutes, while for all other benchmarks the synthesis (including determinization and minimization) took only few milliseconds.
Reachability analysis on \benchClaroline was already shown to be challenging~\cite{staticanalysis,BeeDamLie21}).
Even for \benchClaroline, our techniques allow for comparably fast and effective configuration monitor synthesis.

\paragraph{\textbf{RQ2}: What are the potential space savings of minimization?}
\Cref{tb:benchmarks-size} shows the size of the monitor after language-preserving (5) and language-relaxing (6) minimization, respectively (see \Cref{sec:synth-finalization}).
For the latter, we also removed self loops, \ie, the monitor stays in its state if a non-enabled action is observed.
In particular, language-relaxing minimization reduces the number of states significantly, leading to very small monitors.
For \benchAerouc, \benchCpterminal, and \benchClaroline, we discover that the number of states is even further reduced.
So, we conclude that minimization can indeed significantly reduce VTS sizes.
Noteworthy, the number of states provides an upper bound on the number of configurations which can be distinguished by observation.
In the extreme case, \benchClaroline, this number is four orders of magnitude lower than the number of configurations.
Thus, most configurations are indistinguishable by an observer, even under full-observability.
So, as a byproduct, our work on configuration monitoring has revealed an explanation for successes reported in family-based analysis~\cite{DBLP:journals/tse/ClassenCSHLR13}.

\newcommand{\dimmed}[1]{\textcolor{gray}{#1}}

\begin{table}[t]
  \centering
  \bgroup
  \small
  \def\arraystretch{1.2}
  \setlength\tabcolsep{4pt}
  \begin{tabular}{r||c|c|c|c||c}
    & $k = 1$ & $k = 2$ & $k = 3$ & $k = 4$ & $k = \abs{\tsActions}$ \\ \hline\hline
    \benchSvm & 26\% \dimmed{(0\%)} & 61\% \dimmed{(23\%)} & 79\% \dimmed{(26\%)} & \textbf{83\%} \dimmed{(33\%)} & 83\% \\ \hline
    \benchMinepump & 26\% \dimmed{(0\%)} & 45\% \dimmed{(0\%)} & 60\% \dimmed{(0\%)} & 71\% \dimmed{(0\%)} & \textbf{79\%} \\ \hline
    \benchAerouc & 25\% \dimmed{(0\%)} & \textbf{44\%} \dimmed{(0\%)} & 44\% \dimmed{(0\%)} & 44\% \dimmed{(0\%)} & 44\% \\ \hline
    \benchCpterminal & 24\% \dimmed{(0\%)} & 37\% \dimmed{(0\%)} & \textbf{40\%} \dimmed{(0\%)} & 40\% \dimmed{(8\%)} & 40\%
  \end{tabular}
  \egroup
  \vspace{6pt}
  \caption{Maximal \dimmed{(minimal)} expected percentage of ruled-out configurations after $1\,000$ steps over all combinations of $k$ observable actions.}
  \vspace{-2em}
  \label{tb:observability}
\end{table}

\paragraph{\textbf{RQ3}: How does partial observability impact the specificity of verdicts?}
To answer this question, we employ the following methodology:
We construct monitors where only a limited number of $k$ actions of $\tsActions$ are considered to be observable. For this, we range over all subsets of $\tsActions$ with $k$ elements, for $1{\leq}k{\leq}4$ and $k{=}\abs{\tsActions}$, and
employ Monte Carlo simulation to compute the expected percentage of ruled-out configurations after $1\,000$ steps.
To this end, $160\cdot10^3$ runs, each of length $1\,000$, are simulated through the system models and observations are fed to the synthesized monitor.
For each of these runs, we uniformly sample a configuration and choose actions uniformly at random.
Note that this gives us a set of expected percentages for each $k$ of which we report the maximum and minimum in \Cref{tb:observability}.
In total, for \Cref{tb:observability}, we synthesized $14\,200$ different monitors and conducted approximately $2\,272\cdot10^6$ simulation runs.
Exploiting parallelization, this took $2.5$ hours on our benchmark machine.
For \benchClaroline this approach is unsuitable as the huge number of valid configurations would require many more runs to obtain statistically significant results.

We see that for \benchAerouc and \benchCpterminal, on average only around 42\% of configurations can be ruled out after $1\,000$ steps (which is sufficient for the monitor to converge on a verdict).
In contrast, for \benchSvm and \benchMinepump, on average 81\% of configurations can be ruled out.
Again, this fits our earlier observation that they have a higher number of states compared to the number of valid configurations than \benchAerouc, \benchCpterminal, and \benchClaroline.

The results also show that the precise set of actions is crucial, as otherwise the specificity of the verdicts may not improve at all.
For instance, for \benchAerouc, with $k = 2$ observable actions, we already can obtain optimal verdicts, while even with $k = 4$, there are sets of observable actions where we cannot discern any configurations at all.
For \benchSvm $k=4$ and for \benchCpterminal $k = 3$ actions can be sufficient.
For \benchMinepump, no $k$-combination for $k \leq 4$ is sufficient.
Note that the number of possible subsets of $\tsActions$ is $2^\abs{\tsActions}$, hence, we did not investigate anything beyond $k = 4$, except $k = \abs{\tsActions}$, which represents full observability.

\section{Conclusion}
\label{sec:conclusion}
With the aim to synthesize configuration monitors, we introduced verdict transition systems (VTSs) as a foundational model which turns out to also cover classical notions of automata-based runtime monitors and diagnosers.

Enabled by this foundation, we developed a modular and generic VTS synthesis pipeline.
We showed that the pipeline can be instantiated for the synthesis of diagnosers, runtime monitors, and configuration monitors, thereby solving the configuration monitor synthesis challenge.
Our techniques go well beyond classical constructions, enabling predictions and robustness to imperfect 
observations for all these applications.
Furthermore, for diagnosis, we showed how to synthesize diagnosers capable of answering powerful 
modal logic fault queries.

We demonstrated the efficacy of our approach for configuration monitoring on multiple well-established benchmarks from the configurable systems community.
Our results show that the approach scales well and can effectively deal with benchmarks which are known to be challenging in the configurable systems literature.
They also provide a new explanation of the success of family-based verification in configurable systems analysis.

For future work, we plan to develop a fully symbolic VTS synthesis framework.
We also plan to extend the observational imperfections, \eg, to bounded reorderings, and study the interplay between different instances of VTSs, \eg, 
in cases where configurations and possible faults interact.
Another interesting direction is to use VTSs to obtain actionable consequences for verdicts 
and as a basis for planning actions towards making verdicts more specific.

\newpage

\bibliographystyle{splncs04}
\bibliography{paper}

\begin{thebibliography}{10}
\providecommand{\url}[1]{\texttt{#1}}
\providecommand{\urlprefix}{URL }
\providecommand{\doi}[1]{https://doi.org/#1}

\bibitem{DBLP:conf/case/AcarS15}
Acar, A.N., Schmidt, K.W.: Discrete event supervisor design and application for manufacturing systems with arbitrary faults and repairs. In: {IEEE} International Conference on Automation Science and Engineering, {CASE} 2015, Gothenburg, Sweden, August 24-28, 2015. pp. 825--830. {IEEE} (2015). \doi{10.1109/CoASE.2015.7294183}, \url{https://doi.org/10.1109/CoASE.2015.7294183}

\bibitem{10.1007/978-3-642-54830-7_15}
Almagor, S., Kupferman, O.: Latticed-ltl synthesis in the presence of noisy inputs. In: Muscholl, A. (ed.) Foundations of Software Science and Computation Structures. pp. 226--241. Springer Berlin Heidelberg, Berlin, Heidelberg (2014)

\bibitem{ApeBatKas13a}
Apel, S., Batory, D.S., K{\"{a}}stner, C., Saake, G.: Feature-Oriented Software Product Lines - Concepts and Implementation. Springer (2013)

\bibitem{DBLP:conf/vmcai/BarringerGHS04}
Barringer, H., Goldberg, A., Havelund, K., Sen, K.: Rule-based runtime verification. In: Steffen, B., Levi, G. (eds.) Verification, Model Checking, and Abstract Interpretation, 5th International Conference, {VMCAI} 2004, Venice, Italy, January 11-13, 2004, Proceedings. Lecture Notes in Computer Science, vol.~2937, pp. 44--57. Springer (2004). \doi{10.1007/978-3-540-24622-0\_5}, \url{https://doi.org/10.1007/978-3-540-24622-0\_5}

\bibitem{DBLP:conf/fsttcs/BauerLS06}
Bauer, A., Leucker, M., Schallhart, C.: Monitoring of real-time properties. In: Arun{-}Kumar, S., Garg, N. (eds.) {FSTTCS} 2006: Foundations of Software Technology and Theoretical Computer Science, 26th International Conference, Kolkata, India, December 13-15, 2006, Proceedings. Lecture Notes in Computer Science, vol.~4337, pp. 260--272. Springer (2006). \doi{10.1007/11944836\_25}, \url{https://doi.org/10.1007/11944836\_25}

\bibitem{DBLP:conf/rv/0002LS07}
Bauer, A., Leucker, M., Schallhart, C.: The good, the bad, and the ugly, but how ugly is ugly? In: Sokolsky, O., Tasiran, S. (eds.) Runtime Verification, 7th International Workshop, {RV} 2007, Vancouver, Canada, March 13, 2007, Revised Selected Papers. Lecture Notes in Computer Science, vol.~4839, pp. 126--138. Springer (2007). \doi{10.1007/978-3-540-77395-5\_11}, \url{https://doi.org/10.1007/978-3-540-77395-5\_11}

\bibitem{DBLP:journals/tosem/BauerLS11}
Bauer, A., Leucker, M., Schallhart, C.: Runtime verification for {LTL} and {TLTL}. {ACM} Trans. Softw. Eng. Methodol.  \textbf{20}(4),  14:1--14:64 (2011). \doi{10.1145/2000799.2000800}, \url{https://doi.org/10.1145/2000799.2000800}

\bibitem{DBLP:conf/cav/BaumeisterFSST20}
Baumeister, J., Finkbeiner, B., Schirmer, S., Schwenger, M., Torens, C.: Rtlola cleared for take-off: Monitoring autonomous aircraft. In: Lahiri, S.K., Wang, C. (eds.) Computer Aided Verification - 32nd International Conference, {CAV} 2020, Los Angeles, CA, USA, July 21-24, 2020, Proceedings, Part {II}. Lecture Notes in Computer Science, vol. 12225, pp. 28--39. Springer (2020). \doi{10.1007/978-3-030-53291-8\_3}, \url{https://doi.org/10.1007/978-3-030-53291-8\_3}

\bibitem{BeeDamLie21}
ter Beek, M.H., Damiani, F., Lienhardt, M., Mazzanti, F., Paolini, L.: Efficient static analysis and verification of featured transition systems. Empirical Software Engineering  \textbf{27}(1), ~10 (2021). \doi{10.1007/s10664-020-09930-8}, \url{https://doi.org/10.1007/s10664-020-09930-8}

\bibitem{staticanalysis}
Beek, M.H.t., Damiani, F., Lienhardt, M., Mazzanti, F., Paolini, L.: Static analysis of featured transition systems. In: Proceedings of the 23rd International Systems and Software Product Line Conference - Volume A. pp. 39--51. SPLC '19, Association for Computing Machinery, New York, NY, USA (2019). \doi{10.1145/3336294.3336295}, \url{https://doi.org/10.1145/3336294.3336295}

\bibitem{DBLP:journals/tc/BernatBL01}
Bernat, G., Burns, A., Llamos{\'{\i}}, A.: Weakly hard real-time systems. {IEEE} Trans. Computers  \textbf{50}(4),  308--321 (2001). \doi{10.1109/12.919277}, \url{https://doi.org/10.1109/12.919277}

\bibitem{DBLP:conf/dsd/BodenKBR04}
Boden, M., Koegst, M., Bad{\'{\i}}a, J.L.T., R{\"{u}}lke, S.: Cost-efficient implementation of adaptive finite state machines. In: 2004 Euromicro Symposium on Digital Systems Design {(DSD} 2004), Architectures, Methods and Tools, 31 August - 3 September 2004, Rennes, France. pp. 144--151. {IEEE} Computer Society (2004). \doi{10.1109/DSD.2004.1333270}, \url{https://doi.org/10.1109/DSD.2004.1333270}

\bibitem{10.1007/3-540-48683-6_25}
Bruns, G., Godefroid, P.: Model checking partial state spaces with 3-valued temporal logics. In: Halbwachs, N., Peled, D. (eds.) Computer Aided Verification. pp. 274--287. Springer Berlin Heidelberg, Berlin, Heidelberg (1999)

\bibitem{10.1007/978-3-540-27836-8_26}
Bruns, G., Godefroid, P.: Model checking with multi-valued logics. In: D{\'\i}az, J., Karhum{\"a}ki, J., Lepist{\"o}, A., Sannella, D. (eds.) Automata, Languages and Programming. pp. 281--293. Springer Berlin Heidelberg, Berlin, Heidelberg (2004)

\bibitem{DBLP:journals/tc/Bryant86}
Bryant, R.E.: Graph-based algorithms for boolean function manipulation. {IEEE} Trans. Computers  \textbf{35}(8),  677--691 (1986). \doi{10.1109/TC.1986.1676819}, \url{https://doi.org/10.1109/TC.1986.1676819}

\bibitem{DBLP:journals/automatica/CarvalhoMBL13}
Carvalho, L.K., Moreira, M.V., Basilio, J.C., Lafortune, S.: Robust diagnosis of discrete-event systems against permanent loss of observations. Autom.  \textbf{49}(1),  223--231 (2013). \doi{10.1016/j.automatica.2012.09.017}, \url{https://doi.org/10.1016/j.automatica.2012.09.017}

\bibitem{CasCosMar08}
Castro, M., Costa, M., Martin, J.: Better bug reporting with better privacy. In: Eggers, S.J., Larus, J.R. (eds.) Proceedings of the 13th International Conference on Architectural Support for Programming Languages and Operating Systems, {ASPLOS} 2008, Seattle, WA, USA, March 1-5, 2008. pp. 319--328. {ACM} (2008). \doi{10.1145/1346281.1346322}, \url{https://doi.org/10.1145/1346281.1346322}

\bibitem{DBLP:books/cu/Chellas80}
Chellas, B.F.: Modal Logic - An Introduction. Cambridge University Press (1980). \doi{10.1017/CBO9780511621192}, \url{https://doi.org/10.1017/CBO9780511621192}

\bibitem{DBLP:conf/rv/CimattiTT19}
Cimatti, A., Tian, C., Tonetta, S.: Assumption-based runtime verification with partial observability and resets. In: Finkbeiner, B., Mariani, L. (eds.) Runtime Verification - 19th International Conference, {RV} 2019, Porto, Portugal, October 8-11, 2019, Proceedings. Lecture Notes in Computer Science, vol. 11757, pp. 165--184. Springer (2019). \doi{10.1007/978-3-030-32079-9\_10}, \url{https://doi.org/10.1007/978-3-030-32079-9\_10}

\bibitem{DBLP:journals/csur/ClarkeW96}
Clarke, E.M., Wing, J.M.: Formal methods: State of the art and future directions. {ACM} Comput. Surv.  \textbf{28}(4),  626--643 (1996). \doi{10.1145/242223.242257}, \url{https://doi.org/10.1145/242223.242257}

\bibitem{Classen2010b}
Classen, A.: {Modelling with FTS: a Collection of Illustrative Examples}. Tech. Rep. P-CS-TR SPLMC-00000001, PReCISE Research Center, University of Namur, Namur, Belgium (2010), \url{https://projects.info.unamur.be/fts/publications/}

\bibitem{DBLP:journals/tse/ClassenCSHLR13}
Classen, A., Cordy, M., Schobbens, P., Heymans, P., Legay, A., Raskin, J.: Featured transition systems: Foundations for verifying variability-intensive systems and their application to {LTL} model checking. {IEEE} Trans. Software Eng.  \textbf{39}(8),  1069--1089 (2013). \doi{10.1109/TSE.2012.86}, \url{https://doi.org/10.1109/TSE.2012.86}

\bibitem{DBLP:conf/fmics/ColomboPS08}
Colombo, C., Pace, G.J., Schneider, G.: Dynamic event-based runtime monitoring of real-time and contextual properties. In: Cofer, D.D., Fantechi, A. (eds.) Formal Methods for Industrial Critical Systems, 13th International Workshop, {FMICS} 2008, L'Aquila, Italy, September 15-16, 2008, Revised Selected Papers. Lecture Notes in Computer Science, vol.~5596, pp. 135--149. Springer (2008). \doi{10.1007/978-3-642-03240-0\_13}, \url{https://doi.org/10.1007/978-3-642-03240-0\_13}

\bibitem{DBLP:conf/sbmf/ConventHLS0T18}
Convent, L., Hungerecker, S., Leucker, M., Scheffel, T., Schmitz, M., Thoma, D.: Tessla: Temporal stream-based specification language. In: Massoni, T., Mousavi, M.R. (eds.) Formal Methods: Foundations and Applications - 21st Brazilian Symposium, {SBMF} 2018, Salvador, Brazil, November 26-30, 2018, Proceedings. Lecture Notes in Computer Science, vol. 11254, pp. 144--162. Springer (2018). \doi{10.1007/978-3-030-03044-5\_10}, \url{https://doi.org/10.1007/978-3-030-03044-5\_10}

\bibitem{CzaEis2000}
Czarnecki, K., Eisenecker, U.W.: Generative Programming: Methods, Tools, and Applications. ACM Press/Addison-Wesley Publishing Co. (2000)

\bibitem{DBLP:conf/time/DAngeloSSRFSMM05}
D'Angelo, B., Sankaranarayanan, S., S{\'{a}}nchez, C., Robinson, W., Finkbeiner, B., Sipma, H.B., Mehrotra, S., Manna, Z.: {LOLA:} runtime monitoring of synchronous systems. In: 12th International Symposium on Temporal Representation and Reasoning {(TIME} 2005), 23-25 June 2005, Burlington, Vermont, {USA}. pp. 166--174. {IEEE} Computer Society (2005). \doi{10.1109/TIME.2005.26}, \url{https://doi.org/10.1109/TIME.2005.26}

\bibitem{Dev17}
Devroey, X.: Behavioural model-based testing of software product lines. Ph.D. thesis, University of Namur (2017). \doi{10.5281/zenodo.4105899}

\bibitem{di2012understanding}
Di~Natale, M., Zeng, H., Giusto, P., Ghosal, A.: Understanding and using the controller area network communication protocol: theory and practice. Springer Science \& Business Media (2012)

\bibitem{DBLP:conf/isola/DubslaffK22}
Dubslaff, C., K{\"{o}}hl, M.A.: Configurable-by-construction runtime monitoring. In: Margaria, T., Steffen, B. (eds.) Leveraging Applications of Formal Methods, Verification and Validation. Verification Principles - 11th International Symposium, ISoLA 2022, Rhodes, Greece, October 22-30, 2022, Proceedings, Part {I}. Lecture Notes in Computer Science, vol. 13701, pp. 220--241. Springer (2022). \doi{10.1007/978-3-031-19849-6\_14}, \url{https://doi.org/10.1007/978-3-031-19849-6\_14}

\bibitem{6769369}
Elliott, E.O.: Estimates of error rates for codes on burst-noise channels. The Bell System Technical Journal  \textbf{42}(5),  1977--1997 (1963). \doi{10.1002/j.1538-7305.1963.tb00955.x}

\bibitem{DBLP:journals/sttt/FalconeFM12}
Falcone, Y., Fernandez, J., Mounier, L.: What can you verify and enforce at runtime? Int. J. Softw. Tools Technol. Transf.  \textbf{14}(3),  349--382 (2012). \doi{10.1007/s10009-011-0196-8}, \url{https://doi.org/10.1007/s10009-011-0196-8}

\bibitem{felser2005real}
Felser, M.: Real-time ethernet-industry prospective. Proceedings of the IEEE  \textbf{93}(6),  1118--1129 (2005)

\bibitem{DBLP:journals/fmsd/FerrandoCFLPFM21}
Ferrando, A., Cardoso, R.C., Farrell, M., Luckcuck, M., Papacchini, F., Fisher, M., Mascardi, V.: Bridging the gap between single- and multi-model predictive runtime verification. Formal Methods Syst. Des.  \textbf{59}(1),  44--76 (2021). \doi{10.1007/s10703-022-00395-7}, \url{https://doi.org/10.1007/s10703-022-00395-7}

\bibitem{6768434}
Gilbert, E.N.: Capacity of a burst-noise channel. The Bell System Technical Journal  \textbf{39}(5),  1253--1265 (1960). \doi{10.1002/j.1538-7305.1960.tb03959.x}

\bibitem{DBLP:conf/birthday/GonenK15}
Gonen, H., Kupferman, O.: Inherent vacuity in lattice automata. In: Beklemishev, L.D., Blass, A., Dershowitz, N., Finkbeiner, B., Schulte, W. (eds.) Fields of Logic and Computation {II} - Essays Dedicated to Yuri Gurevich on the Occasion of His 75th Birthday. Lecture Notes in Computer Science, vol.~9300, pp. 174--192. Springer (2015). \doi{10.1007/978-3-319-23534-9\_10}, \url{https://doi.org/10.1007/978-3-319-23534-9\_10}

\bibitem{DBLP:conf/rv/GorostiagaS18}
Gorostiaga, F., S{\'{a}}nchez, C.: Striver: Stream runtime verification for real-time event-streams. In: Colombo, C., Leucker, M. (eds.) Runtime Verification - 18th International Conference, {RV} 2018, Limassol, Cyprus, November 10-13, 2018, Proceedings. Lecture Notes in Computer Science, vol. 11237, pp. 282--298. Springer (2018). \doi{10.1007/978-3-030-03769-7\_16}, \url{https://doi.org/10.1007/978-3-030-03769-7\_16}

\bibitem{DBLP:conf/fossacs/HalamishK11}
Halamish, S., Kupferman, O.: Minimizing deterministic lattice automata. In: Hofmann, M. (ed.) Foundations of Software Science and Computational Structures - 14th International Conference, {FOSSACS} 2011, Held as Part of the Joint European Conferences on Theory and Practice of Software, {ETAPS} 2011, Saarbr{\"{u}}cken, Germany, March 26-April 3, 2011. Proceedings. Lecture Notes in Computer Science, vol.~6604, pp. 199--213. Springer (2011). \doi{10.1007/978-3-642-19805-2\_14}, \url{https://doi.org/10.1007/978-3-642-19805-2\_14}

\bibitem{DBLP:conf/atva/HalamishK12}
Halamish, S., Kupferman, O.: Approximating deterministic lattice automata. In: Chakraborty, S., Mukund, M. (eds.) Automated Technology for Verification and Analysis - 10th International Symposium, {ATVA} 2012, Thiruvananthapuram, India, October 3-6, 2012. Proceedings. Lecture Notes in Computer Science, vol.~7561, pp. 27--41. Springer (2012). \doi{10.1007/978-3-642-33386-6\_4}, \url{https://doi.org/10.1007/978-3-642-33386-6\_4}

\bibitem{HOPCROFT1971189}
Hopcroft, J.: An n log n algorithm for minimizing states in a finite automaton. In: Kohavi, Z., Paz, A. (eds.) Theory of Machines and Computations, pp. 189--196. Academic Press (1971). \doi{https://doi.org/10.1016/B978-0-12-417750-5.50022-1}, \url{https://www.sciencedirect.com/science/article/pii/B9780124177505500221}

\bibitem{DBLP:conf/rv/HuangEZMLSR14}
Huang, J., Erdogan, C., Zhang, Y., Moore, B.M., Luo, Q., Sundaresan, A., Rosu, G.: {ROSRV:} runtime verification for robots. In: Bonakdarpour, B., Smolka, S.A. (eds.) Runtime Verification - 5th International Conference, {RV} 2014, Toronto, ON, Canada, September 22-25, 2014. Proceedings. Lecture Notes in Computer Science, vol.~8734, pp. 247--254. Springer (2014). \doi{10.1007/978-3-319-11164-3\_20}, \url{https://doi.org/10.1007/978-3-319-11164-3\_20}

\bibitem{DBLP:conf/rv/KallwiesLSSTW22}
Kallwies, H., Leucker, M., Schmitz, M., Schulz, A., Thoma, D., Weiss, A.: Tessla - an ecosystem for runtime verification. In: Dang, T., Stolz, V. (eds.) Runtime Verification - 22nd International Conference, {RV} 2022, Tbilisi, Georgia, September 28-30, 2022, Proceedings. Lecture Notes in Computer Science, vol. 13498, pp. 314--324. Springer (2022). \doi{10.1007/978-3-031-17196-3\_20}, \url{https://doi.org/10.1007/978-3-031-17196-3\_20}

\bibitem{Kang1990}
Kang, K.C., Cohen, S.G., Hess, J.A., Novak, W.E., Peterson, A.S.: Feature-oriented domain analysis (foda) feasibility study. Tech. rep., Carnegie-Mellon University Software Engineering Institute (1990)

\bibitem{DBLP:conf/icalp/KestenPR98}
Kesten, Y., Pnueli, A., Raviv, L.: Algorithmic verification of linear temporal logic specifications. In: Larsen, K.G., Skyum, S., Winskel, G. (eds.) Automata, Languages and Programming, 25th International Colloquium, ICALP'98, Aalborg, Denmark, July 13-17, 1998, Proceedings. Lecture Notes in Computer Science, vol.~1443, pp. 1--16. Springer (1998). \doi{10.1007/BFB0055036}, \url{https://doi.org/10.1007/BFb0055036}

\bibitem{KimBodBat10}
Kim, C.H.P., Bodden, E., Batory, D., Khurshid, S.: Reducing configurations to monitor in a software product line. In: Proceedings of the 10th International Conference on Runtime Verification (RV'10). pp. 285--299. Springer, Berlin, Heidelberg (2010)

\bibitem{artifact}
K{\"o}hl, M.A., Dubslaff, C., Hermanns, H.: {ATVA'24 Artifact: Configuration Monitor Synthesis} (2024). \doi{10.5281/zenodo.12583622}

\bibitem{DBLP:journals/fttcs/Kupferman22}
Kupferman, O.: Multi-valued reasoning about reactive systems. Found. Trends Theor. Comput. Sci.  \textbf{15}(2),  126--228 (2022). \doi{10.1561/0400000083}, \url{https://doi.org/10.1561/0400000083}

\bibitem{DBLP:conf/vmcai/KupfermanL07}
Kupferman, O., Lustig, Y.: Lattice automata. In: Cook, B., Podelski, A. (eds.) Verification, Model Checking, and Abstract Interpretation, 8th International Conference, {VMCAI} 2007, Nice, France, January 14-16, 2007, Proceedings. Lecture Notes in Computer Science, vol.~4349, pp. 199--213. Springer (2007). \doi{10.1007/978-3-540-69738-1\_14}, \url{https://doi.org/10.1007/978-3-540-69738-1\_14}

\bibitem{DBLP:journals/jlp/LeuckerS09}
Leucker, M., Schallhart, C.: A brief account of runtime verification. J. Log. Algebraic Methods Program.  \textbf{78}(5),  293--303 (2009). \doi{10.1016/j.jlap.2008.08.004}, \url{https://doi.org/10.1016/j.jlap.2008.08.004}

\bibitem{DBLP:conf/iccad2/MhamdiNDM17}
Mhamdi, L., Njima, C.B., Dhouibi, H., Messaoud, H.: Using timed automata and fuzzy logic for diagnosis of multiple faults in {DES}. In: International Conference on Control, Automation and Diagnosis, {ICCAD} 2017, Hammamet, Tunisia, January 19-21, 2017. pp. 457--463. {IEEE} (2017). \doi{10.1109/CADIAG.2017.8075702}, \url{https://doi.org/10.1109/CADIAG.2017.8075702}

\bibitem{DBLP:conf/fdl/Morin-AlloryB06}
Morin{-}Allory, K., Borrione, D.: On-line monitoring of properties built on regular expressions. In: Forum on specification and Design Languages, {FDL} 2006, September 19-22, 2006, Darmstadt, Germany, Proceedings. pp. 249--255. {ECSI} (2006), \url{http://www.ecsi-association.org/ecsi/main.asp?l1=library\&fn=def\&id=366}

\bibitem{myhill1957finite}
Myhill, J.: Finite automata and the representation of events. WADD Technical Report  \textbf{57},  112--137 (1957)

\bibitem{b5eab576-57e1-36aa-bfa3-62b318183436}
Nerode, A.: Linear automaton transformations. Proceedings of the American Mathematical Society  \textbf{9}(4),  541--544 (1958), \url{http://www.jstor.org/stable/2033204}

\bibitem{DBLP:journals/tcad/PazzagliaM22}
Pazzaglia, P., Maggio, M.: Characterizing the effect of deadline misses on time-triggered task chains. {IEEE} Trans. Comput. Aided Des. Integr. Circuits Syst.  \textbf{41}(11),  3957--3968 (2022). \doi{10.1109/TCAD.2022.3199146}, \url{https://doi.org/10.1109/TCAD.2022.3199146}

\bibitem{PelStrJur18}
Peldszus, S., Str\"{u}ber, D., J\"{u}rjens, J.: Model-based security analysis of feature-oriented software product lines. In: Proceedings of the 17th ACM SIGPLAN International Conference on Generative Programming: Concepts and Experiences. pp. 93--106. GPCE 2018, Association for Computing Machinery, New York, NY, USA (2018). \doi{10.1145/3278122.3278126}, \url{https://doi.org/10.1145/3278122.3278126}

\bibitem{DBLP:journals/jss/PinisettyJTFMP17}
Pinisetty, S., J{\'{e}}ron, T., Tripakis, S., Falcone, Y., Marchand, H., Preoteasa, V.: Predictive runtime verification of timed properties. J. Syst. Softw.  \textbf{132},  353--365 (2017). \doi{10.1016/j.jss.2017.06.060}, \url{https://doi.org/10.1016/j.jss.2017.06.060}

\bibitem{DBLP:conf/focs/Pnueli77}
Pnueli, A.: The temporal logic of programs. In: 18th Annual Symposium on Foundations of Computer Science, Providence, Rhode Island, USA, 31 October - 1 November 1977. pp. 46--57. {IEEE} Computer Society (1977). \doi{10.1109/SFCS.1977.32}, \url{https://doi.org/10.1109/SFCS.1977.32}

\bibitem{DBLP:journals/ibmrd/RabinS59}
Rabin, M.O., Scott, D.S.: Finite automata and their decision problems. {IBM} J. Res. Dev.  \textbf{3}(2),  114--125 (1959). \doi{10.1147/rd.32.0114}, \url{https://doi.org/10.1147/rd.32.0114}

\bibitem{RamSek02}
Ramakrishnan, C.R., Sekar, R.: Model-based analysis of configuration vulnerabilities. Journal of Computer Security  \textbf{10}(1-2),  189--209 (2002). \doi{10.3233/JCS-2002-101-209}

\bibitem{DBLP:journals/csr/RuijtersS15}
Ruijters, E., Stoelinga, M.: Fault tree analysis: {A} survey of the state-of-the-art in modeling, analysis and tools. Comput. Sci. Rev.  \textbf{15},  29--62 (2015). \doi{10.1016/j.cosrev.2015.03.001}, \url{https://doi.org/10.1016/j.cosrev.2015.03.001}

\bibitem{DBLP:journals/tac/SampathSLST95}
Sampath, M., Sengupta, R., Lafortune, S., Sinnamohideen, K., Teneketzis, D.: Diagnosability of discrete-event systems. {IEEE} Trans. Autom. Control.  \textbf{40}(9),  1555--1575 (1995). \doi{10.1109/9.412626}, \url{https://doi.org/10.1109/9.412626}

\bibitem{DBLP:journals/tcst/SampathSLST96}
Sampath, M., Sengupta, R., Lafortune, S., Sinnamohideen, K., Teneketzis, D.: Failure diagnosis using discrete-event models. {IEEE} Trans. Control. Syst. Technol.  \textbf{4}(2),  105--124 (1996). \doi{10.1109/87.486338}, \url{https://doi.org/10.1109/87.486338}

\bibitem{cudd}
Somenzi, F.: {CUDD}: {CU} decision diagram package. Tech. rep., University of Colorado at Boulder (2015)

\bibitem{DBLP:conf/rtss/TindellHW94}
Tindell, K., Hanssmon, H., Wellings, A.J.: Analysing real-time communications: Controller area network {(CAN)}. In: Proceedings of the 15th {IEEE} Real-Time Systems Symposium {(RTSS} '94), San Juan, Puerto Rico, December 7-9, 1994. pp. 259--263. {IEEE} Computer Society (1994). \doi{10.1109/REAL.1994.342710}, \url{https://doi.org/10.1109/REAL.1994.342710}

\bibitem{DBLP:journals/tie/TovarV99}
Tovar, E., Vasques, F.: Real-time fieldbus communications using profibus networks. {IEEE} Trans. Ind. Electron.  \textbf{46}(6),  1241--1251 (1999). \doi{10.1109/41.808018}, \url{https://doi.org/10.1109/41.808018}

\bibitem{valmari_et_al:LIPIcs:2008:1328}
Valmari, A., Lehtinen, P.: {Efficient Minimization of DFAs with Partial Transition}. In: Albers, S., Weil, P. (eds.) 25th International Symposium on Theoretical Aspects of Computer Science. Leibniz International Proceedings in Informatics (LIPIcs), vol.~1, pp. 645--656. Schloss Dagstuhl--Leibniz-Zentrum fuer Informatik, Dagstuhl, Germany (2008). \doi{10.4230/LIPIcs.STACS.2008.1328}, \url{http://drops.dagstuhl.de/opus/volltexte/2008/1328}

\bibitem{WotFriAnd18}
Wotawa, F., Friedrich, G., Andrzejak, A.: Software configuration diagnosis? a survey of existing methods and open challenges (2018), \url{http://confws.ist.tugraz.at}, 20th International Workshop on Configuration, ConfWS ; Conference date: 27-09-2018 Through 28-09-2018

\bibitem{DBLP:conf/fpl/ZhangZLZLB22}
Zhang, P., Zhang, S., Li, S., Zhang, J., Liu, S., Bu, Y.: {FRA-FPGA:} fast reconfigurable automata processing on fpgas. In: 32nd International Conference on Field-Programmable Logic and Applications, {FPL} 2022, Belfast, United Kingdom, August 29 - Sept. 2, 2022. pp. 313--321. {IEEE} (2022). \doi{10.1109/FPL57034.2022.00055}, \url{https://doi.org/10.1109/FPL57034.2022.00055}

\bibitem{DBLP:conf/nfm/ZhangLD12}
Zhang, X., Leucker, M., Dong, W.: Runtime verification with predictive semantics. In: Goodloe, A., Person, S. (eds.) {NASA} Formal Methods - 4th International Symposium, {NFM} 2012, Norfolk, VA, USA, April 3-5, 2012. Proceedings. Lecture Notes in Computer Science, vol.~7226, pp. 418--432. Springer (2012). \doi{10.1007/978-3-642-28891-3\_37}, \url{https://doi.org/10.1007/978-3-642-28891-3\_37}

\end{thebibliography}

\end{document}